\newtheorem{theorem}{Theorem}[section]
\newtheorem{lemma}[theorem]{Lemma}
\newtheorem{proposition}[theorem]{Proposition}
\newtheorem{corollary}[theorem]{Corollary}
\theoremstyle{definition}
\newtheorem{definition}[theorem]{Definition}
\newtheorem{example}[theorem]{Example}
\theoremstyle{remark}
\newtheorem{remark}[theorem]{Remark}
\def\diag{\operatorname{diag}}
\def\rank{\operatorname{rank}}
\def\RR{\mathbb{R}}
\def\ZZ{\mathbb{Z}}
\def\min{\mathrm{min}}
\def\bsigma{{\Bar{\sigma}}}
\def\mtx#1#2#3#4{\left[\begin{array}{cc}#1 & #2\\ #3 & #4 \end{array}\right]}
\def\Tr{\operatorname{tr}}
\def\Ker{\operatorname{Ker}}
\def\Im{\operatorname{Im}}
\newtheorem*{main-question}{{\bf Main Question}}
\def\CC{\mathbb{C}}
\def\OOplus={{ {{{\mathcal{O}}}}_+}}
\def\OObar={\overline {\mathcal{O}_+}}
\def\RR{\mathbb{R}}
\def\IR{\mathbb{R}}
\def\IZ{\mathbb{Z}}
\def\Z{\mathbb{Z}}
\def\E{\mathcal{E}}
\def\sE{\mathcal{E}}
\def\sN{\mathcal{N}}
\def\e{\varepsilon}
\def\one{\mathbf{1}}
\newcommand{\od}{\stackrel{\mbox {\tiny {def}}}{=}}
\def\tr{\operatorname{tr}}
\newcommand{\supp}{\operatorname{supp}}
\def\bsigma{{\Bar{\sigma}}}
\def\mtx#1#2#3#4{\left[\begin{array}{cc}#1 & #2\\ #3 & #4 \end{array}\right]}
\def\vec#1#2{\left(\begin{array}{c}#1\\ #2 \end{array}\right)}
\definecolor{gold}{rgb}{0.85,.66,0}
\definecolor{cherry}{rgb}{0.9,.1,.2}
\definecolor{burgundy}{rgb}{0.8,.2,.2}
\definecolor{orangered}{rgb}{0.85,.3,0}
\definecolor{orange}{rgb}{0.85,.4,0}
\definecolor{olive}{rgb}{.45,.4,0}
\definecolor{lime}{rgb}{.6,.9,0}
\definecolor{green}{rgb}{.2,.7,0}
\definecolor{grey}{rgb}{.4,.4,.2}
\definecolor{brown}{rgb}{.4,.2,.1}
\definecolor{blue}{rgb}{0,.0, .81}
\definecolor{red}{rgb}{1,0,0}
\def\thmI{Let $(W,D)$ be a threshold-linear network.  A subset of neurons
  $\sigma$ is a stable set of $(W,D)$ if and only if the principal
  submatrix $(-D+W)_\sigma$ is stable.  Similarly, $\sigma$ is a marginal set or an unstable set
  of $(W,D)$ if and only if $(-D+W)_\sigma$ is marginally stable or unstable, respectively.}
\def\thmII{A threshold-linear network is maximally flexible in $\sN(n)$ if and only if it is a rank $1$ network.}
\def\thmIII{Let $(J,D)_G$ be a maximally flexible threshold-linear
  network in $\sN(G)$, and suppose that the clique complex $X(G)$ 
  satisfies $H_1(X(G);\ZZ) = 0$.  Then $(J,D)_G$ has a rank 1 completion.  In particular, $(J,D)_G$ has no silent connections.}
\def\thmV{All rank 1 threshold-linear networks on $n$ neurons are maximally flexible in $\sN(n)$, and have flexibility $2^n-n-1$.  All $G$-constrained threshold-linear networks with a rank 1 completion are maximally flexible in $\sN(G)$, and have flexibility $|X(G)|-n-1$.}
\title{Flexible memory networks\footnote{Accepted to Bulletin of Mathematical Biology, 11 July 2011.}}
\author{Carina Curto$^1$, Anda Degeratu$^2$, Vladimir Itskov$^1$}
\date{}
\begin{document}

\maketitle

\vspace{-.3 in}
\noindent \hspace{.75in} $^1$ Department of Mathematics, University of Nebraska-Lincoln

\noindent \hspace{.75in} $^2$ Max Planck Institute for Gravitational Physics, Golm, Germany

%\noindent \hspace{.75in} {\em Key words:} neural networks, memory, learning,clique complex, low rank.

\begin{abstract}
Networks of neurons in some brain areas are flexible enough to encode new memories quickly.  Using a standard firing rate model of recurrent networks, we develop a theory of flexible memory networks.  Our main results characterize networks having the maximal number of flexible memory patterns, given a constraint graph on the network's connectivity matrix.  Modulo a mild topological condition, we find a close connection between maximally flexible networks and rank $1$ matrices.  The topological condition is $H_1(X;\ZZ)=0$, where $X$ is the clique complex associated to the network's constraint graph; this condition is generically satisfied for large random networks that are not overly sparse.  In order to prove our main results, we develop some matrix-theoretic tools and present them in a self-contained section independent of the neuroscience context.
\end{abstract}

\begin{small}
\tableofcontents
\end{small}

\section{Introduction}
New memories in some brain areas can be encoded quickly \cite{Rutishauser2006}. 
It is widely believed that memories are stored via
changes in the synaptic efficacies between neurons. Irrespective of the plasticity mechanism, or 
`learning rule', used to encode memory patterns, rapid learning is perhaps most easily
accomplished if new patterns can be learned via only small changes in connection strengths between neurons. It
may thus be desirable for fast-learning, 
flexible networks to have architectures that enable many
memory patterns to be encoded (and unencoded) by only small perturbations of the synaptic
connections.  Here by `architecture' we mean the pattern of synaptic strengths, or {\em weights}, assigned to directed connections between neurons.  Which network architectures allow maximal flexibility for learning and unlearning new memories?

We study this question in the context of a standard firing rate model of recurrent neural networks.    Building on the framework of  `permitted' and `forbidden' sets first introduced in \cite{Seung:2003}, we think of the recurrent network as a gating device that allows only a restricted set of patterns, the stored `memories', to be activated by external feed-forward input.  
In Theorem~\ref{thm:stable-clique}, we establish a correspondence between the memory patterns encoded by a recurrent network and the set of stable principal submatrices of the network's effective connectivity matrix.  We then make precise the notion of memory patterns that are `flexible' in the sense that they can be encoded (learned) and unencoded (forgotten) via only small changes to the network weights.  Our main results, Theorems~\ref{thm:rank1}, \ref{thm:main-result} and \ref{thm:unconstrained}, characterize network architectures with the maximal number of flexible memories.  

\subsection{Network dynamics and architecture}
We consider a standard firing rate model \cite{DayanAbbott, ErmentroutTerman} with heterogeneous timescales,
\begin{equation*} \frac{d x_i}{dt}=-\frac1{\tau_i}x_i+\varphi\left(\sum_{j=1}^n  W_{ij}x_j+b_i  \right)\!\!, \;\;\; \mathrm{for}\; \;i=1,...,n,
\end{equation*}
where $n$ is the number of neurons.  The real-valued function $x_i = x_i(t)$ is the firing rate of the $i$th neuron, $b_i$ is the external input to the $i$th neuron, and $W_{ij}$ denotes the effective strength of the recurrent connection from the $j$th to the $i$th neuron.  The timescale $\tau_i$ gives the rate of recovery to rest in the absence of external or recurrent inputs.  The nonlinear function $\varphi:\RR \rightarrow \RR_{\geq 0}$ satisfies $\varphi(y) = 0$ whenever $y \leq 0$, and ensures that the firing rates $x_1,...,x_n$ are non-negative.  Although the threshold appears to be zero for all neurons, heterogeneous thresholds can easily be incorporated into the $b_{i}$s.
Note that  $\tau_i > 0$, while $b_i$ and $W_{ij}$ can take on both positive and negative values.  The dynamics of the network can be described more compactly as
\begin{equation}\label{eq:dynamics}
\dot x = -Dx+\varphi\left(W x + b  \right)\!,
\end{equation}
where $D\od \operatorname{diag}({\tau_1}^{-1},...,{\tau_n}^{-1})$ is the diagonal matrix of inverse time constants, and $\varphi$ is applied to each coordinate when the argument is a vector.  
Note that we do not require that the matrix $W$ respect Dale's law\footnote{Dale's law states that every element in the same column of the connectivity matrix must have the same sign  \cite{DayanAbbott}.  This is because neurons have either purely excitatory or purely inhibitory synapses onto other neurons.}, as the entries are considered to be {\em effective} connection strengths between {\em principal} (excitatory) neurons.  Negative weights thus reflect effectively inhibitory interactions, mediated by the presence of non-specific interneurons that do not otherwise enter into the model.  We always assume that the diagonal entries of the matrix $-D + W$ are strictly negative; otherwise, individual neurons may experience a run-away excitatory drive even in the absence of external or recurrent inputs.  We will also assume that $b_i$ is constant in time, though it may vary across neurons.  For a given choice of nonlinearity $\varphi$, the network and its dynamics~\eqref{eq:dynamics} are denoted by the pair of matrices $(W,D)$.

%\rev{NOTE: Two sentences originally in the next paragraph have been shifted to the previous paragraph, so that the next paragraph can now be focused on addressing how our picture of an attractor network fits into a larger, more realistic neuroscience framework.}

It is worth noting here that we regard the model  \eqref{eq:dynamics} as a description of {\it fast-timescale dynamics}.  A more realistic network would also include stochastic fluctuations and adaptive mechanisms on a slower timescale \cite{Abbott:2004:Nature}, so that `fixed points' of the fast-timescale dynamics only appear for short periods of time.  Such fixed points will serve as our model for (transiently) activated memory patterns.  

To study network flexibility, we will think of the matrix of effective connection strengths $W$ as a sum of two components,
$$W = J + A,$$
where $J$ corresponds to a fixed and underlying {\em architecture}, and $A$ is a matrix of {\em perturbations} about $J$. 
While $J$ reflects broad patterns of connection strengths that may be conserved across animals or across time, the matrix $A$ captures individual variations, and is constantly changing as a function of the animal's learning and experience.  Our main question is then: 

\begin{main-question}[version 1]
  What architectures $J$ allow maximal flexibility for learning and
  unlearning new memory patterns under small perturbations $A$?
\end{main-question}

\noindent We will consider this as a question about the strengths, or {\em weights}, of the recurrent connections between neurons, rather than as a question about which neurons are connected.  The pattern of allowed connections between neurons will be treated as a constraint.  Indeed, networks with different {\em strengths} of connections, but identical connectivity patterns, may have significantly different dynamics and attractors.  Moreover, in biological neural circuits the anatomical connectivity may be difficult to modify, but the weights of synapses are known to change on relatively short timescales in response to learning and experience.

\subsection{Memory patterns as `stable sets' in threshold-linear networks}

Before addressing our main question about perturbations of network architectures, we investigate the set of
memory patterns corresponding to any fixed network $(W,D)$.  The idea of stable fixed points as a model for stored memory patterns in recurrent networks dates back at least to \cite{Hopfield:1982}.  Following the framework of  \cite{Seung:2003}, subsets of neurons that are active at stable fixed points of~\eqref{eq:dynamics} will serve as our model for stored memory patterns.

Recall that a fixed point $x^*$ is {\em asymptotically stable} if there exists an open neighborhood $U$ of $x^*$ such that $\lim_{t\to\infty}x(t)=x^*$ for every trajectory $x(t)$ with the initial condition $x(0)\in U.$ 
If the fixed point $x^*$ has only the property that for all nearby initial
  conditions $x(0)\in U$ the trajectory $x(t)$ remains very close to $x^*$ for
  all later times, then $x^*$ is a {\em stable} point of the
  network dynamics. Note that every asymptotically stable point is
  stable, but the converse is not true.
For a given firing rate vector $x \in \RR^n_{\geq 0}$, we call the subset of active neurons the {\em support} of $x$,
$$\supp(x)\od \{i \;|\; x_{i} > 0\} \subset \{1,...,n\}.$$
%We can now define `stable sets':

\begin{definition}[stable, marginal, and unstable sets]\label{def:stable-set}
Let $(W,D)$ be a network on $n$ neurons with nonlinearity $\varphi$.  
A non-empty subset of neurons $\sigma \subset \{1,...,n\}$ is a {\em stable set} of $(W,D)$ if there exists an asymptotically stable fixed point $x^{*}$ of the dynamics \eqref{eq:dynamics} such that $\supp(x^{*}) = \sigma$, for at least one external input vector $b \in \RR^n$.   
A {\em marginal set} of $(W,D)$ is a non-empty subset of neurons
$\sigma$ for which there exists a
stable fixed point of the dynamics (but no asymptotically stable fixed point) with support $\sigma$ for at
least one external input vector $b$, and
an {\em unstable set} of $(W,D)$ is a non-empty subset of neurons that is neither stable
nor marginal.\footnote{Stable and unstable sets were previously introduced in  \cite{Seung:2003}, where
they were called ``permitted'' and ``forbidden'' sets, respectively.}
\end{definition}

\noindent Stable sets are our model for memory patterns encoded by the
network.  For a fixed external input $b$ there may be one, many, or no
asymptotically stable fixed points.  As we range over all possible
inputs, however, we obtain the set of stable sets of the network.
Clearly, there can be at most $2^n-1$ stable sets in a network of $n$
neurons.  In cases of interest, however, the recurrent network
performs meaningful computations precisely because only a small
fraction of subsets are stable \cite{Seung:2003}.  Note that a pair of neurons in a stable set need not be connected (i.e., we may have $W_{ij} = 0$ for $i,j$ in a stable set).  

In general, it is very difficult to determine analytically the stable fixed points for a
high-dimensional, nonlinear dynamical system.  If the nonlinearity
$\varphi$ in \eqref{eq:dynamics} is threshold-linear, however, it is possible
to use standard tools from linear systems of ordinary
differential equations to obtain exact results.  For this reason,
we now restrict ourselves to {\em threshold-linear networks}, which
are networks $(W,D)$ where the nonlinearity is chosen as
%\vspace{-.1in}
\begin{eqnarray*}
 \varphi(y) = [y]_+ \od 
  \left\{ \begin{array}{ccc} y & \mbox{if} & y >0, \\ 
      0 & \mbox{if} & y \leq 0. \end{array}\right.
\end{eqnarray*}
Although sigmoids more closely match experimentally measured
input-output curves for neurons, the above
threshold-nonlinearity is often a good approximation when neurons are
far from saturation \cite{DayanAbbott, Geffen2009}.  If we assume that encoded memory patterns are realized by
 neurons firing sufficiently below saturation, it is reasonable to
model them as stable sets of the threshold-linear
dynamics:
\begin{equation}\label{eq:threshlin}
 \dot x = -Dx+\left[W x + b \right]_+.
\end{equation}

In a (nondegenerate) linear system, $\dot x = (-D+W)x + b$, there can be at most one fixed point of the dynamics for a given input vector $b \in \RR^n$; its stability is characterized by the eigenvalues of the matrix $-D+W$.  Unlike linear systems, the threshold-linear network \eqref{eq:threshlin} can exhibit multiple fixed points for the same input vector $b$.  It turns out, however, that stable, unstable and marginal sets of neurons in threshold-linear networks have simple characterizations in terms of the eigenvalues of the corresponding {\em principal submatrices} of $-D+W$.  

Given an $n \times n$ matrix $A$, and a subset $\sigma \subset
  \{1,...,n\}$, the {\em principal submatrix} $A_\sigma$ is the square
  matrix obtained by restricting $A$ to the index set $\sigma$;
  i.e., if $\sigma = \{s_1,...,s_k\}$, then $A_\sigma$ is the $k
  \times k$ matrix with $(A_\sigma)_{ij} = A_{s_i s_j}$. 
We call a square matrix {\em stable} if all its eigenvalues have
strictly negative real part.  We call a matrix {\em unstable} if at least one eigenvalue has
strictly positive real part, and {\em marginally stable} if no eigenvalue has
strictly positive real part and at least one eigenvalue is purely imaginary.
Marginally stable matrices are thus on the boundary between stable and unstable
matrices.   

We now state our characterization of stable sets in terms of the stability of principal submatrices:

\begin{theorem}\label{thm:stable-clique}
\thmI
\end{theorem}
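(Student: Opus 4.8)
The plan is to reduce the analysis of the fixed points of the threshold-linear dynamics~\eqref{eq:threshlin} to the behavior of the \emph{linear} system one obtains by restricting attention to a fixed support $\sigma$. The key observation is that if $x^*$ is a fixed point with $\supp(x^*)=\sigma$, then on the coordinates in $\sigma$ the argument of $[\,\cdot\,]_+$ is strictly positive and the equation reads $0 = (-D+W)_\sigma x^*_\sigma + b_\sigma$, while on the coordinates outside $\sigma$ we need the argument to be $\le 0$, i.e. $(Wx^*+b)_i \le 0$ for $i \notin \sigma$. So the first step is to make this precise: given $\sigma$, I would introduce the reduced linear flow $\dot y = (-D+W)_\sigma\, y + b_\sigma$ on $\RR^\sigma_{>0}$, and show that in a neighborhood of such an $x^*$ (where all active coordinates stay positive and all inactive ones stay at zero, by continuity and the strict inequalities), the full nonlinear dynamics~\eqref{eq:threshlin} agrees exactly with this linear system in the $\sigma$-coordinates together with $\dot x_i = -D_{ii} x_i$ (hence $x_i \equiv 0$) for $i \notin \sigma$. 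This is the heart of the matter and the step I expect to require the most care: I need to argue both that trajectories starting near $x^*$ with support $\sigma$ remain in the open region where this local linearization is valid (so that stability of the submatrix genuinely governs the local dynamics), and conversely that the ``off'' inequalities $(Wx+b)_i \le 0$ for $i \notin \sigma$ are stably satisfiable — this is where the freedom to choose $b$ enters.

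Next I would handle the two directions. For the ``if'' direction: suppose $(-D+W)_\sigma$ is stable. Then it is invertible, so I can pick $b_\sigma$ so that the unique equilibrium $x^*_\sigma = -(-D+W)_\sigma^{-1} b_\sigma$ of the reduced linear flow lies in the positive orthant $\RR^\sigma_{>0}$ (any strictly positive vector is attainable this way). Having fixed $x^*_\sigma$, I then choose the remaining entries $b_i$ for $i \notin \sigma$ sufficiently negative that $(Wx^*+b)_i < 0$; by continuity this strict inequality persists on a neighborhood. On that neighborhood the dynamics is the linear system above, whose equilibrium $x^*$ is asymptotically stable precisely because all eigenvalues of $(-D+W)_\sigma$ have negative real part (the inactive coordinates contract to $0$ at rate $D_{ii}>0$). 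Hence $\sigma$ is a stable set. For the ``only if'' direction: if $\sigma$ is a stable set, there is some $b$ and an asymptotically stable $x^*$ with support $\sigma$; by the local-linearization step the linearization of~\eqref{eq:threshlin} at $x^*$ is block-triangular with diagonal blocks $(-D+W)_\sigma$ and $-D_{\sigma^c}$, so asymptotic stability of $x^*$ forces all eigenvalues of $(-D+W)_\sigma$ to have negative real part, i.e. the submatrix is stable.

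Finally, the marginal and unstable cases follow from the same local picture by tracking where the eigenvalues of $(-D+W)_\sigma$ sit. If $(-D+W)_\sigma$ is marginally stable (no eigenvalue with positive real part, at least one purely imaginary), then for a suitable $b$ placing the linear equilibrium in $\RR^\sigma_{>0}$ the fixed point $x^*$ is (Lyapunov) stable but not asymptotically stable, so $\sigma$ is a marginal set; and conversely a marginal fixed point with support $\sigma$ forces the submatrix to be marginally stable. The unstable case is then immediate by elimination: $\sigma$ is unstable iff it is neither stable nor marginal iff $(-D+W)_\sigma$ has an eigenvalue with strictly positive real part. One technical wrinkle worth flagging in the write-up is the genuinely non-hyperbolic marginal case, where ``stable but not asymptotically stable'' for the \emph{nonlinear} system really does reduce to the linear criterion only because the dynamics is exactly linear on the relevant neighborhood — so the usual caveats about center manifolds do not bite here. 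I would also remark that the hypothesis of strictly negative diagonal entries of $-D+W$ is what guarantees every singleton is at worst handled correctly, though it is not essential to the argument for general $\sigma$.
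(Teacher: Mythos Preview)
Your proposal is correct and follows essentially the same approach as the paper: reduce to a local linear system near a fixed point with support $\sigma$, observe the block-triangular structure with diagonal blocks $(-D+W)_\sigma$ and $-D_{\bar\sigma}$, and exploit the freedom in $b$ (choosing $b_\sigma$ to place $x^*_\sigma$ in the positive orthant and $b_{\bar\sigma}$ negative enough for strict inequalities). The paper packages the local analysis into a separate Proposition~\ref{prop:stable-set} and, in the ``only if'' direction, makes explicit that one may always adjust $b_{\bar\sigma}$ to get the \emph{strict} inequality $b_{\bar\sigma} < -W_{\bar\sigma\sigma} x^*_\sigma$ needed for the linearization to apply---a point you implicitly rely on but should state when you write it up.
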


\noindent In the case of symmetric threshold-linear networks, where the matrix $W$ is symmetric, the equivalence between stable (`permitted') sets and stable principal submatrices was shown in  \cite{Seung:2003}.  We give the proof of Theorem~\ref{thm:stable-clique} in Section~\ref{sec:stable-set}.

\subsection{G-constrained networks}
There are two ways in which a zero-weight connection between two neurons may arise.
On the one hand, there may be a lack of anatomical connectivity between the neurons.
On the other hand, many synaptic connections that appear anatomically are not functional
- these are referred to as {\em silent synapses} \cite{Kerchner:2008}.  While the first type of zero-weight connection
cannot be perturbed without major changes to the network architecture, silent synapses may become active via small
modifications.  In addressing our main question, we are therefore interested in characterizing maximally flexible networks
where some connections are constrained to be zero, while the remaining weights (some of which may also be zero) can 
be modified by small perturbations of the network. The following definitions hold for general networks, not just threshold-linear ones.

Let $G = (V,E)$ be a simple graph with vertices $V = \{1,...,n\}$ and edges $E$.  
We say that an $n \times n$ architecture matrix $J$ is {\em constrained by} the graph $G$
if $J_{ij} = 0$ for all edges $(ij) \notin E$.  By abuse of notation, we often use $G$ to refer to the edge set $E$.  Note that all
architectures on $n$ neurons are constrained by the complete graph
$G=K_n$.  
If for $(ij) \in G$ the entry $J_{ij} = 0$, we say that
there is a {\em silent connection} from neuron $j$ to neuron $i$.  This mirrors the phenomenon of silent synapses in the brain.

We define an {\em $\varepsilon$-perturbation} of a network architecture $J$ to be a matrix $A$ whose entries all satisfy $|A_{ij}| \leq \varepsilon$. We say that an $\varepsilon$-perturbation is {\em consistent with} $G$ if the matrix $A$ satisfies $A_{ij} = 0$ for all $(ij) \notin G$.  In other words, consistent $\varepsilon$-perturbations can only perturb entries that are not constrained to be zero (including silent connections). 

%This reflects the neuroscience fact that it is much easier to change
%the strength of an existing connection than it is to create one where there was not one before.

When considering an architecture $J$ that
is constrained by a graph $G$, we refer to the network
as $(J,D)_G$.  For a given $\varphi$, we
use the following notation for the set of all
$G$-constrained network architectures:
\begin{eqnarray*}
  \sN(G): = \{(J,D)_G\} = \{(J,D)  \; | \; J_{ij} = 0 \mbox{ for all } (ij) \notin G\}.
\end{eqnarray*}
Note that the set of constrained architectures is independent of the nonlinearity $\varphi$. 
When $G = K_n$ is the complete graph (no constraint), we will simply write $\sN(n): = \sN(K_n)$.  If $G_1 \subset G_2$, then $\sN(G_1) \subset \sN(G_2)$.  An $\varepsilon$-perturbation of a network $(J,D)_G$ will always be assumed to be consistent with $G$, and hence to stay within $\sN(G)$.

We can now state our main question a bit more precisely:

\begin{main-question}[version 2]
  For a given constraint graph $G$, what network architectures $(J,D)_G \in \sN(G)$ allow 
  the maximal number of subsets of neurons that can become both stable sets (learned/encoded) and unstable sets (forgotten/unencoded) via arbitrarily small $\varepsilon$-perturbations of $J$?
\end{main-question}

\noindent We call such subsets of neurons {\em flexible} memory patterns.

\subsection{Flexible memory patterns as `flexible cliques'}

Intuitively, a flexible memory pattern is a subset of neurons that can become both a stable set and an unstable set via only small
modifications of the network's connection strengths.  Ideally, these modifications should be specific enough not to change the stability of any other subsets.  
Moreover, we would like flexible memory patterns to correspond to subsets of neurons that are unconstrained in their connections to each other.  In other words, these subsets of neurons should be all-to-all connected in the sense
that all mutual connections can be perturbed, although some may be zero-weight (silent) connections.  
We model such memory patterns as `flexible cliques'; a precise definition is given below.

Recall that a {\em clique} in a graph $G$ is a subset of vertices that are all-to-all connected, and the {\em clique complex} of $G$, denoted $X(G)$, is the set of all cliques.  We will say that  $\sigma \subset \{1,...,n\}$ is a {\em stable clique} of the network
$(W,D)_G$ if $\sigma$ is a stable set  and
$\sigma \in X(G)$.  Similarly, an {\em unstable clique} is an unstable
set $\sigma$ such that $\sigma \in X(G)$, and a {\em marginal clique} is
a marginal set $\sigma$ such that $\sigma \in X(G)$.  
Because the stability of a matrix forces one or more of its principal submatrices
to be stable (see Lemma~\ref{lemma:2x2}), one cannot require that a perturbation that changes the stability of a marginal
clique in a threshold-linear network also preserves all other marginal cliques.
For this reason we introduce the notions of `maximally stable' and `minimally unstable' cliques.
A {\em maximally stable clique} is a stable clique that is not properly contained in
any larger stable clique; a {\em minimally unstable clique} is an
unstable clique that does not properly contain any other unstable
clique.  We can now define flexible cliques:

\begin{definition}[flexible clique]\label{def:flexible-clique}
We call a subset of neurons $\sigma \subset \{1,...,n\}$ a {\em flexible clique} of a network architecture on $n$ neurons, $(J,D)_G$, if for every $\varepsilon>0$ there exist $\varepsilon$-perturbations $A_s$ and $A_u$, consistent with $G$, such that $\sigma$ is a maximally stable clique of $(J+A_s,D)_G$ and a minimally unstable clique of $(J+A_u,D)_G$.  
\end{definition}

\noindent Flexible cliques are our model for flexible memory patterns.
All flexible cliques are marginal cliques, but the converse is not
true (see Section~\ref{sec:flex-marg}).  This is because the flexibility of a marginal clique
depends on the relationship of this clique to other cliques in the network.  In general, it is difficult to determine whether or 
not a marginal clique is flexible in a network with many marginal cliques.  We are interested in precisely this case,
as we look for properties of networks with the maximal number of flexible cliques.

\subsection{Statement of the Main Results}

Our main results all concern threshold-linear networks only.  Consequently, from now on we assume $\sN(n)$ and $\sN(G)$ correspond to sets of unconstrained and $G$-constrained threshold-linear networks, respectively.
First, we define what we mean by the `flexibility' and `rank' of a network:

\begin{definition}[network flexibility, rank, and completion]\label{def:net-flex}
We define the {\em flexibility} of a network as the number of flexible cliques, and denote it: $\mathrm{flex}(J, D)_G$.
We define the {\em rank} of a network $(J,D)_G$ to be the rank of the matrix $-D+J$.  We say that a $G$-constrained network on $n$ neurons, $(J,D)_G$, has a rank $k$ {\em completion} if there exists a network  $(\bar{J},\bar{D}) \in \sN(n)$ of rank $k$ such that $\bar{D} = D$ and $\bar{J}_{ij} = J_{ij}$ for all $i = j$ and all distinct pairs $(ij) \in G$. 
\end{definition}

\noindent We now further refine our main question:

\begin{main-question}[version 3]
  For a given constraint graph $G$, what threshold-linear networks $(J,D)_G \in \sN(G)$ attain maximum
  flexibility?
\end{main-question}

\noindent Note that the flexibility of a $G$-constrained network $(J,D)_G$ is bounded by the total number of non-empty cliques in the corresponding clique complex $X(G)$, and by the fact that single neurons can not be flexible cliques because $-D+J$ has strictly negative diagonal.  
Thus,
$$\mathrm{flex}(J,D)_G \leq |X(G)|-n-1,$$
where $n$ is the number of neurons.  In $\sN(n)$, the flexibility can be at most $2^n-n-1$.  Most networks, however, have no flexible cliques.

The rank of any network $(J,D)_G$ is at least 1. 
For threshold-linear networks, rank 1 networks are good candidates for attaining maximum flexibility because all but the $1 \times 1$ principal submatrices are marginally stable.   Indeed, we find that rank 1 networks attain the upper bound on flexibility in $\sN(n)$, and a similar statement is true about $G$-constrained networks:

\begin{theorem}\label{thm:rank1}
\thmV
\end{theorem}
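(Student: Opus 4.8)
The plan is to reduce everything to the unconstrained rank~1 case and then read off the flexibility count directly. First I would describe explicitly what a rank~1 threshold-linear network looks like: the matrix $M = -D+W$ can be written in a normalized form where, after conjugating by a positive diagonal matrix (which does not change stability of principal submatrices, by Theorem~\ref{thm:stable-clique}), $M$ has the shape $M = -I + \xi \eta^T$ for suitable vectors, with the diagonal constraint $M_{ii} < 0$ forcing $\xi_i \eta_i < 1$ for all $i$. The key computational fact is that for such a matrix every principal submatrix $M_\sigma$ has spectrum $\{-1, \dots, -1, -1 + \sum_{i \in \sigma}\xi_i\eta_i\}$ (the $-1$'s with multiplicity $|\sigma|-1$). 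Hence $M_\sigma$ is stable iff $\sum_{i\in\sigma}\xi_i\eta_i < 1$, marginally stable iff the sum equals $1$, and unstable iff the sum exceeds $1$. In particular, a principal submatrix is never ``strongly'' unstable in more than one direction — it is always within a single eigenvalue of being marginal — which is exactly what makes rank~1 networks maximally flexible.

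Next I would establish the flexibility claim for $\sN(n)$. Given any nonempty $\sigma \subset \{1,\dots,n\}$ with $|\sigma| \geq 2$, I need to exhibit, for each $\varepsilon>0$, perturbations $A_s$ and $A_u$ making $\sigma$ a maximally stable clique and a minimally unstable clique respectively. Since $X(K_n)$ is the full power set, ``clique'' imposes no constraint and this is purely about tuning the eigenvalue $-1 + \sum_{i}\xi_i\eta_i$ over various index sets. The natural move is to perturb a single entry, or the entries $\xi_i$ for $i \in \sigma$, by $O(\varepsilon)$ amounts so that $\sum_{i\in\sigma}\xi_i\eta_i$ crosses $1$ while all the $(n-1)$-element ``overcliques'' of $\sigma$ already have their sum above $1$ (for maximal stability of $\sigma$) or all proper subsets of $\sigma$ have sum below $1$ (for minimal instability). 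Here I would use that generically the partial sums $\sum_{i \in \tau}\xi_i\eta_i$ are all distinct and nonzero, so by a small perturbation I can arrange $\sigma$ to be exactly on the desired side of threshold while its neighbors in the containment order are strictly on the correct sides; making the perturbation respect $A_s, A_u \to 0$ as $\varepsilon \to 0$ is automatic since we move entries by $O(\varepsilon)$. Counting: every one of the $2^n - n - 1$ subsets of size $\geq 2$ is flexible, and single neurons and the empty set are not, so $\mathrm{flex} = 2^n - n - 1$, matching the upper bound already noted in the text; hence rank~1 networks are maximally flexible.

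Finally, for the $G$-constrained statement I would invoke the definition of a rank~1 completion: if $(J,D)_G$ has a rank~1 completion $(\bar J, \bar D) \in \sN(n)$, then $-\bar D + \bar J = -I + \xi\eta^T$ (in normalized form) agrees with $-D+J$ on the diagonal and on all edges of $G$. A subset $\sigma \in X(G)$ is a clique, so all its internal connections are unconstrained, meaning $M_\sigma = \bar M_\sigma$ depends only on the completion's entries indexed by $\sigma$; thus the stability analysis above applies verbatim to $G$-cliques. The perturbations $A_s, A_u$ constructed for $\sigma$ in the unconstrained case touch only entries $(ij)$ with $i,j \in \sigma \in X(G)$, hence are consistent with $G$ (after checking that the normalization/conjugation step can be absorbed without disturbing the zero pattern — this is where I expect the main technical care to be needed). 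Therefore every $\sigma \in X(G)$ with $|\sigma| \geq 2$ is a flexible clique, giving $\mathrm{flex}(J,D)_G = |X(G)| - n - 1$, which again meets the stated upper bound, so such networks are maximally flexible in $\sN(G)$.

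The step I expect to be the real obstacle is the construction of the perturbations $A_s$ and $A_u$ together with the verification that $\sigma$ becomes \emph{maximally} stable (resp.\ \emph{minimally} unstable), i.e.\ controlling the stability of \emph{all} the immediate neighbors of $\sigma$ in the containment poset simultaneously with a single $\varepsilon$-small perturbation — and doing so while keeping the perturbation consistent with $G$ and compatible with the diagonal-conjugation normalization. The eigenvalue bookkeeping itself is easy once the rank~1 normal form is in hand; the delicate part is the uniformity over all subsets and the genericity argument that lets a small perturbation separate the relevant partial sums from the threshold value~$1$.
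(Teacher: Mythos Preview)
Your proposal contains a genuine error at the very first step: the normal form $M=-I+\xi\eta^T$ is not a rank~$1$ matrix. By Definition~\ref{def:net-flex} a rank~$1$ network means $\rank(-D+J)=1$, and conjugation by a diagonal matrix preserves rank; but $-I+\xi\eta^T$ has rank at least $n-1$ for any $\xi,\eta$, so for $n\geq 3$ it is never rank~$1$. The correct picture is $M=-xy^T$ with $x_iy_i>0$, and then each principal submatrix $M_\sigma$ has spectrum $\{0,\ldots,0,\ \tr M_\sigma\}=\{0,\ldots,0,\ -\sum_{i\in\sigma}x_iy_i\}$, not $\{-1,\ldots,-1,\ -1+\sum_i\xi_i\eta_i\}$. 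Consequently every $M_\sigma$ with $|\sigma|\geq 2$ is already marginally stable, and there is no ``threshold value~$1$'' for partial sums to cross; your subsequent genericity argument about separating partial sums from~$1$ has no content in the actual setting. The hard part is precisely what you flag at the end --- producing explicit $\varepsilon$-perturbations that simultaneously control the stability of $\sigma$ and of all its neighbors in the containment order --- and your sketch does not supply one once the spectral bookkeeping is corrected.

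The paper's route is rather different in flavor. It first treats the single reference network $-D+J=-\one$ (all entries $-1$) by hand: it writes down symmetric perturbations $A_s,A_u$ explicitly, computes the relevant principal minors via a closed-form determinant identity, and invokes the sign criterion for symmetric stability (Corollary~\ref{cor:signcondition}) rather than tracking eigenvalues directly. It then transports this to an arbitrary symmetric rank~$1$ matrix $-xx^T=\diag(x)(-\one)\diag(x)$ by left/right multiplication (not conjugation), which preserves signs of all principal minors; and to the general nonsymmetric rank~$1$ case $-xy^T$ by conjugating to the symmetric one. For the $G$-constrained statement the paper does \emph{not} restrict perturbations to entries inside $\sigma$ as you propose; instead it proves a short pruning lemma (Lemma~\ref{lem:pruning}): if $\sigma\in X(\tilde G)\subset X(G)$ is flexible in $(J,D)_G$ then it remains flexible in the pruned network, because every clique of $\tilde G$ containing or contained in $\sigma$ is already a clique of $G$ and the perturbations agree on those entries. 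Your idea of perturbing only within $\sigma$ is not unreasonable --- note that any $\tau\supsetneq\sigma$ would then have $(M+A_s)_\tau$ of rank at most $1+|\sigma|<|\tau|$, hence singular and not stable --- but making the minimal-instability half work this way still requires an explicit construction, which you have not given.
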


\noindent The proof is given in Section~\ref{sec:thm2}.

Can any networks other than ones that are rank 1, or have rank 1 completions, attain maximal flexibility?  The following example demonstrates that a $G$-constrained network can be maximally flexible without having a rank 1 completion.

\begin{example}
Let $G = (V,E)$ be a simple graph with vertices $V = \{1,2,3,4\}$ and
edges $E = \{(12),(23),(34),(41)\}$.  $G$ is a cycle on $4$ vertices,
so the clique complex $X(G)$ has no cliques of size greater than $2$.  Consider the threshold-linear network $(J,D)_G \in \sN(G)$, where 
$$-D+J = \left(\begin{array}{cccc} -1 & 2 & 0 & 1 \\ 1/2 & -1 & 1 & 0\\ 0 & 1 & -1 & 1\\ 1 & 0 & 1 & -1 \end{array}\right).$$
Using  Lemma~\ref{lemma:marg-flex} (see Section~\ref{sec:flex-marg}), it is easy to see that all $\sigma \in X(G)$ such that $|\sigma| = 2$ are flexible cliques, and hence $(J,D)_G$ is maximally flexible in $\sN(G)$.  Despite this, $(J,D)_G$ does not have a rank 1 completion, since there is no rank 1 matrix that agrees with $-D+J$ on all of its nonzero entries (cf. Example~\ref{ex:top} in Section~\ref{ssec:rank1}).
\end{example}

\noindent It turns out, however, that examples of this kind can be eliminated by imposing a simple topological condition on the clique complex of the constraint graph $G$.  Note that a clique complex is an abstract simplicial complex, whose homology groups can be computed using simplicial homology.  The following is our main result:

\begin{theorem}\label{thm:main-result}
\thmIII
\end{theorem}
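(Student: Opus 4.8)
The plan is to reduce the statement to a purely combinatorial/matrix-theoretic problem and then patch together local rank-1 data using the topological hypothesis $H_1(X(G);\ZZ)=0$. The object to build is a single rank 1 matrix $M = vw^T$ (a rank 1 completion of $-D+J$) that agrees with $-D+J$ on the diagonal and on every edge $(ij)\in G$. Once such $M$ exists, the network $(\bar J,\bar D)$ with $\bar D = D$ and $-\bar D + \bar J = M$ is a rank 1 completion; since $M$ has all of $J$'s (possibly silent) connections as genuinely nonzero entries whenever the corresponding $v_i w_j \neq 0$, and one can arrange $v_i, w_j \neq 0$ for all $i$, there are no silent connections. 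So the whole theorem comes down to: \emph{a maximally flexible $G$-constrained threshold-linear network, under $H_1(X(G);\ZZ)=0$, admits a rank 1 completion.}

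\textbf{Step 1: extract the local consequences of maximal flexibility.} Using Theorem~\ref{thm:stable-clique} together with the flexibility lemmas promised in Section~\ref{sec:flex-marg} (in particular Lemma~\ref{lemma:marg-flex} and Lemma~\ref{lemma:2x2}), I expect that maximal flexibility forces every clique $\sigma \in X(G)$ to be a \emph{marginal} clique, i.e. every principal submatrix $(-D+J)_\sigma$ with $\sigma$ a clique is marginally stable. For the $2$-element cliques $\sigma = \{i,j\}$ this pins down the $2\times 2$ principal submatrix $\mtx{-d_i}{J_{ij}}{J_{ji}}{-d_j}$ to be marginally stable with negative diagonal, which (by the $2\times2$ analysis that must already be in the matrix-theory section) is equivalent to $J_{ij}J_{ji} = d_i d_j$ — exactly the condition that a rank 1 matrix $vw^T$ with $v_iw_i = -d_i$ imposes on the off-diagonal $(i,j)$ block. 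So on each edge we get a "rank 1" constraint, and the task becomes: glue these edgewise rank 1 constraints into a globally consistent rank 1 matrix.

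\textbf{Step 2: set up the cocycle / obstruction.} Fix $v_i$ by $v_i^2 = -d_i$ wait — more carefully, pick for each vertex $i$ a pair $(v_i, w_i)$ with $v_i w_i = -d_i < 0$, so $v_i$ and $w_i$ have opposite signs; the diagonal is automatically matched. For each edge $(ij)$ we need $v_i w_j = J_{ij}$ and $v_j w_i = J_{ji}$; multiplying, $v_i w_j v_j w_i = J_{ij}J_{ji} = d_i d_j = v_iw_iv_jw_j$, so the two equations are compatible as a \emph{ratio} condition, but there is a one-parameter rescaling freedom at each vertex ($v_i \mapsto t_i v_i$, $w_i \mapsto t_i^{-1} w_i$). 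Writing $J_{ij} = v_i w_j \cdot e_{ij}$ for a candidate base choice, the obstruction to simultaneously killing all $e_{ij}$ by vertex rescalings is a class in the first simplicial cohomology of the $1$-skeleton with coefficients in the multiplicative group $\RR^\times$ (or, taking logs of absolute values and tracking signs separately, in $\RR$ and in $\ZZ/2$). The hypothesis $H_1(X(G);\ZZ)=0$ — equivalently $X(G)$ is simply connected on the $1$-skeleton modulo filling in triangles, so $H^1$ of the $1$-skeleton is detected by triangles — means it suffices to check the cocycle condition \emph{around each triangle} of $X(G)$, i.e. each $3$-clique. But a $3$-clique is itself a marginal clique, so its $3\times3$ principal submatrix is marginally stable; the matrix-theory section should give that a $3\times3$ matrix with negative diagonal all of whose $2\times2$ principal minors vanish in the relevant way is forced to be rank 1, which is exactly the vanishing of the triangle cocycle. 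Hence the global obstruction dies and the rank 1 completion exists.

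\textbf{The main obstacle} I anticipate is getting the cohomological bookkeeping exactly right — in particular handling signs. The multiplicative rescaling group $\RR^\times \cong \RR_{>0} \times \ZZ/2$, so the obstruction splits into a real part (amenable to $H^1(X(G);\RR)=0$, which follows from $H_1(X(G);\ZZ)=0$ by universal coefficients) and a $\ZZ/2$ part (needing $H^1(X(G);\ZZ/2)=0$, which does \emph{not} follow from $H_1(X(G);\ZZ)=0$ in general!). So I expect the real argument must avoid the naive log-and-sign split: instead one should directly show, triangle by triangle using the $3\times 3$ marginal-stability structure, that a consistent choice of signs \emph{already exists locally and extends} because each triangle's submatrix being rank 1 fixes the sign pattern compatibly; then only the positive-real part of the gluing is a genuine cohomology computation, for which $H_1(X(G);\ZZ)=0$ suffices. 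Nailing down that the $3\times3$ marginal condition really does force rank 1 (and not some degenerate marginally-stable-but-rank-2 configuration) — presumably this is where the strict negativity of the diagonal and the earlier lemmas do the work — is the crux, and I would want to isolate it as a clean matrix lemma before attempting the global patching. The final sentence ("no silent connections") is then immediate: in the rank 1 completion $v_iw_j$ is nonzero for every edge, so every edge of $G$ carries a nonzero weight.
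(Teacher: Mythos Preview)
Your overall strategy is essentially the paper's: (i) deduce from maximal flexibility that every clique is marginal, (ii) extract the $2\times 2$ and $3\times 3$ determinant conditions, and (iii) encode the off-diagonal data as a $1$-cochain on $X(G)$ whose coboundary vanishes on triangles, then use the topological hypothesis to kill the cohomology class and produce a global rank $1$ matrix. The paper splits step (iii) exactly as you do, into a real (magnitude) part via $L_{ij}=\ln(|A_{ij}|/\sqrt{A_{ii}A_{jj}})$ and a sign part via the $\ZZ_2$-valued cochain $\{\epsilon_{ij}\}$.

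However, your anticipated ``main obstacle'' rests on a factual error. You write that $H^1(X(G);\ZZ/2)=0$ ``does \emph{not} follow from $H_1(X(G);\ZZ)=0$ in general,'' but it does: by the Universal Coefficient Theorem, $H^1(X;\mathcal{G})\cong \operatorname{Hom}(H_1(X;\ZZ),\mathcal{G})$ for any abelian $\mathcal{G}$, since $\operatorname{Ext}(H_0(X;\ZZ),\mathcal{G})=0$ (as $H_0$ is always free). The paper records this as Lemma~\ref{lem:vanish-Hl1} and uses it directly, so both the $\RR$ and $\ZZ_2$ obstructions vanish simultaneously under the single hypothesis $H_1(X(G);\ZZ)=0$. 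Your proposed workaround --- arguing that the triangle-wise sign data ``already exists locally and extends'' without invoking $H^1(\,\cdot\,;\ZZ_2)=0$ --- does not work: local consistency on triangles is precisely the cocycle condition, and passing from cocycle to coboundary is exactly what vanishing $H^1$ gives you.

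On the point you correctly flag as the crux --- why the $3\times 3$ marginal condition forces $\det=0$ rather than some rank-$2$ marginally stable configuration --- the paper's mechanism (Lemma~\ref{lem:23}(ii)) is that a $3\times 3$ marginally stable matrix with negative diagonal either has $\det=0$ \emph{or} contains a stable $2\times 2$ principal submatrix; the latter is excluded because maximal flexibility makes every $2$-clique marginal, not stable. This is slightly different from your phrasing (``forced to be rank 1''), though once $\det=0$ is established for all $2$- and $3$-cliques, Lemma~\ref{lemma:zero-minors} gives the multiplicative relations $A_{ii}A_{jj}A_{kk}=A_{ij}A_{jk}A_{ki}$, which is the triangle cocycle condition you want.
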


\noindent The vanishing of $H_1(X(G);\ZZ)$ may
at first appear to be a strong condition, but in fact it is generically satisfied
for large random networks that are not overly sparse.  For example, it was recently shown in 
\cite{Kahle2009} that if $G$ is an Erd\"os-R\'enyi graph with edge probability $p$ (i.e., a random graph on $n$ vertices with independent connection probability $p$ between any pair of vertices), then $p\geq n^{-\alpha}$ with $\alpha<1/3$
implies that  the probability of $H_1(X(G);\ZZ) = 0$ approaches $1$ as $n\to \infty$.  For $n = 10^4$ neurons, the first homology group of the clique complex is expected to vanish for connection probabilities as low as $p = .05$.

The proof of Theorem~\ref{thm:main-result} is given in Section~\ref{sec:thm34}.  
For the complete graph $K_n$, $H_1(X(K_n),\ZZ) = 0,$ thus the following result for
unconstrained networks is a corollary of Theorems~\ref{thm:rank1} and \ref{thm:main-result}:

\begin{theorem}\label{thm:unconstrained}
\thmII
\end{theorem}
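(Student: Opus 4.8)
Proof proposal for Theorem~\ref{thm:unconstrained} ("A threshold-linear network is maximally flexible in $\sN(n)$ if and only if it is a rank $1$ network.")

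The plan is to derive this as a corollary of the two main results already in hand, namely Theorem~\ref{thm:rank1} and Theorem~\ref{thm:main-result}, specialized to the constraint graph $G = K_n$. The first thing I would note is that for the complete graph, $X(K_n)$ is the full simplex on $n$ vertices, so it is contractible and in particular $H_1(X(K_n);\ZZ) = 0$. Thus the topological hypothesis of Theorem~\ref{thm:main-result} is automatically satisfied, and there is no loss in passing between "$G$-constrained" and "unconstrained" language: an unconstrained network is simply a $K_n$-constrained network with no silent connections, since every off-diagonal entry $J_{ij}$ is allowed to be nonzero and $X(K_n)$ contains every nonempty subset of $\{1,\dots,n\}$.

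For the ($\Leftarrow$) direction: suppose $(J,D)$ is a rank $1$ network on $n$ neurons. By Theorem~\ref{thm:rank1}, every rank $1$ network on $n$ neurons is maximally flexible in $\sN(n)$ (with flexibility exactly $2^n - n - 1$), so there is nothing further to prove. For the ($\Rightarrow$) direction: suppose $(J,D)$ is maximally flexible in $\sN(n) = \sN(K_n)$. Apply Theorem~\ref{thm:main-result} with $G = K_n$: since $H_1(X(K_n);\ZZ) = 0$, the theorem gives that $(J,D)_{K_n}$ has a rank $1$ completion. But a completion only modifies the constrained (i.e., non-edge) entries and the diagonal, and for $G = K_n$ there are no non-edge off-diagonal entries; the completion is required to agree with $-D+J$ on the diagonal and on all $(ij)\in K_n$, which is every off-diagonal pair. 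Hence the rank $1$ "completion" matrix must equal $-D+J$ itself, so $-D+J$ has rank $1$, i.e., $(J,D)$ is a rank $1$ network.

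The only real subtlety — and the step I would be most careful about — is verifying that the notion of "rank $1$ completion" collapses to "rank $1$ network" when $G = K_n$, which hinges on the precise wording of Definition~\ref{def:net-flex}: the completion $(\bar J, \bar D)$ must satisfy $\bar D = D$ and $\bar J_{ij} = J_{ij}$ for all $i=j$ and all distinct pairs $(ij)\in G$. With $G = K_n$ this pins down every entry of $\bar J$, forcing $\bar J = J$ and hence $-\bar D + \bar J = -D+J$, so rank$(-D+J) = 1$. No genuine obstacle arises here; the argument is a direct specialization, and the bulk of the mathematical content lives in Theorems~\ref{thm:rank1} and \ref{thm:main-result}, which we are permitted to assume. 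One could also remark that the "no silent connections" conclusion of Theorem~\ref{thm:main-result} is vacuous in the unconstrained setting unless some $J_{ij}$ happens to be zero, but it is consistent: a rank $1$ matrix with negative diagonal can realize any prescribed pattern of zero and nonzero off-diagonal entries compatible with rank $1$.
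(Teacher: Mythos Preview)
Your proof is correct and matches the paper's primary proof essentially verbatim: specialize Theorem~\ref{thm:main-result} to $G=K_n$ (where $X(K_n)$ is contractible, so $H_1=0$) for the forward direction, and invoke Theorem~\ref{thm:rank1} for the reverse, noting that a rank~1 completion for $K_n$ is forced to equal $-D+J$ itself. The paper also records a second, homology-free proof that bypasses Theorem~\ref{thm:main-result} by using Lemma~\ref{lem:23} and Lemma~\ref{lemma:rank1} directly, but your route is the one presented first.
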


\noindent We give in
Section~\ref{sec:thm34} a separate proof of this theorem
independent of homology/cohomology arguments.

\subsection{Discussion}
In this article we have laid out the foundations for a theory of flexible memory networks -- that is, for recurrent networks with memory patterns that can be both encoded (learned) and unencoded (forgotten) by arbitrarily small perturbations of the matrix of connection strengths between neurons.  Given a constraint graph $G$ of allowed connections, we have found, modulo a mild topological condition, that maximally flexible networks in $\sN(G)$ correspond precisely to networks $(J,D)_G$ that have a rank $1$ completion.  These results may provide valuable insights for understanding fast-learning, flexible networks in the brain.  

Our results are based on an analysis of the fixed point attractors of a standard firing rate model~\eqref{eq:threshlin}.  We emphasize, however, that we regard this only as a model of the {\em fast-timescale dynamics} of a recurrent network; in a more comprehensive model, additional elements such as stochastic fluctuations, changing external inputs, and adaptive variables on a slower timescale will all lead to frequent transitions between attractors of the fast-timescale equations~\eqref{eq:threshlin}.  This approach has proven particularly fruitful in modeling of hippocampal networks, where ``bump attractor'' models on a fast timescale form integral building blocks to more comprehensive models that have been successful in describing experimental observations from simultaneously recorded populations of neurons \cite{Samsonovich:1997, McNaughton:2006:NN,Romani:2010,JN2011}.

Thus far we have only considered the extreme case of maximally
flexible networks.  For these networks, arbitrarily small perturbations of the synaptic weights between neurons are sufficient to encode new memory patterns.  Nevertheless, larger perturbations will be necessary for these patterns to be robust in the presence of various plasticity mechanisms that are engaged during ongoing spontaneous activity.  For any given learning rule and/or constraint on the rate of synaptic changes, however, maximally flexible networks have the best chance to quickly encode (or unencode) new memories as stable fixed points of the dynamics.

It may be possible to extend these results to
other flexible networks.  Do all rank $k$ networks have the same
flexibility?  What is the general relationship between the flexibility
of a network and its rank?  
What is the appropriate generalization of the topological condition in
Theorem~\ref{thm:main-result} when considering higher rank
completions?  We leave these questions to future work.

The remainder of this paper is organized as follows.  In Section~\ref{sec:stable-set} we prove Theorem~\ref{thm:stable-clique}, making the connection between fixed points of the recurrent network dynamics and the stability of principal submatrices.  In Section~\ref{sec:matrices} we develop some matrix-theoretic results that are critical to proving our main theorems.  This section is self-contained, independent of the context of neural networks.  In Section~\ref{sec:max-flex} we prove our main results, Theorems~\ref{thm:rank1}, \ref{thm:main-result} and \ref{thm:unconstrained}.

%%%%%%%%%  END OF INTRODUCTION %%%%%%%%%%%%%%%%%%%%%%%%%%%%%%%%%%%%%%

\vspace{.1in}
\section{Stable sets correspond to stable principal submatrices} \label{sec:stable-set}
Recall that a threshold-linear network $(W,D)$ has dynamics described by the system,
\vspace{-.02in}
\begin{equation}\label{eq:thresh}
 \dot x = -Dx+\left[W x + b \right]_+,
\end{equation}
with $x\in \IR^n_{\geq 0}$ the firing rate vector, $D$ a diagonal matrix with strictly positive diagonal entries, and $-D+W$ an $n \times n$ matrix having strictly negative diagonal entries.
For such networks, one is able to obtain qualitative
characterizations (stable, marginal, unstable) of sets of neurons.  
In what follows, we consider fixed points of~\eqref{eq:thresh} for a fixed input vector $b \in \RR^n$.

Suppose there exists a fixed point of~\eqref{eq:thresh} with all neurons firing, i.e. $x^*>0$.  Since $Dx^*>0$, we can drop the threshold in a small neighborhood of this fixed point, where the system is described by the linear system $\dot x = (-D +W) x + b$.
If the matrix $-D + W$ is invertible, the linear system has
exactly one fixed point, $(D-W)^{-1}b$, although this fixed point may or may not be in the regime $\IR^n_{>0}$
where all neurons are firing. Either $x^*=(D-W)^{-1}b>0$, 
or we have a contradiction and there is no $x^*$.
As is well-known for linear systems, the fixed point $x^*$ is asymptotically stable if
and only if the matrix $-D+W$ is a stable matrix.

In addition to a possible fixed point with all neurons firing, the system~\eqref{eq:thresh} may also have fixed points
corresponding to proper subsets of neurons with non-zero
firing rate. Let $\sigma = \supp(x^*) \subset \{1,...,n\}$
be the subset of neurons that are firing at the fixed point $x^*$,
with the complement
$\bar{\sigma}$ representing the remaining (silent) neurons.
To describe these types of fixed points, we reorder the neurons and write
$$W = \mtx{W_{\bsigma\bsigma}}{W_{\bsigma\sigma}}{W_{\sigma\bsigma}}{W_{\sigma\sigma}}, \;\;
D = \mtx{D_{\bsigma}}{0}{0}{D_\sigma}, \;\;
x = \vec{x_\bsigma}{x_\sigma}, \mbox{~~and~~}\; b = \vec{b_\bsigma}{b_\sigma}.$$
The system~\eqref{eq:thresh} becomes,
\begin{eqnarray*}
\dot{x}_\bsigma &=& -D_\bsigma x_\bsigma + [W_{\bsigma \bsigma} x_\bsigma + W_{\bsigma \sigma} x_\sigma + b_\bsigma]_+,\\
\dot{x}_\sigma &=& -D_\sigma x_\sigma + [W_{\sigma \bsigma} x_\bsigma + W_{\sigma \sigma} x_\sigma + b_\sigma]_+, 
\end{eqnarray*}
and, since $x^*_\bsigma = 0$, the fixed point equations for $x^*$ simplify to:
\begin{eqnarray*}
0 &=& [W_{\bsigma \sigma} x^*_\sigma + b_\bsigma]_+,\\
D_\sigma x^*_\sigma &=& [W_{\sigma \sigma} x^*_\sigma + b_\sigma]_+.
\end{eqnarray*}
Since $D_\sigma x^*_\sigma > 0$, we can drop the threshold in the second equation and obtain
\begin{equation}\label{eq:fixedpt}
(D_\sigma - W_{\sigma\sigma})x^*_\sigma = b_\sigma.
\end{equation}
However, a solution $x^*$ of this equation only yields a valid fixed point if $x^*_\sigma>0$ and
$W_{\bsigma \sigma} x^*_\sigma + b_\bsigma \leq 0.$

To analyze the stability of a fixed point $x^*$ with $\supp(x^*) = \sigma$, we make the following change of coordinates. Let $\vec{y}{z} \od x-x^*$, with $y \in \RR^{|\bsigma|}_{\geq 0}$ and $z \in \RR^{|\sigma|}_{\geq 0}$.  Then $x^*$ is a stable fixed point of~\eqref{eq:thresh} if and only if the origin is a stable fixed point of:
 \begin{eqnarray*}
\dot{y} &=& -D_\bsigma y + [W_{\bsigma \bsigma} y + W_{\bsigma \sigma} z + (W_{\bsigma \sigma} x^*_\sigma + b_\bsigma)]_+,\\
\dot{z} &=& -D_\sigma(z + x^*_\sigma) + [W_{\sigma \bsigma} y + W_{\sigma \sigma} z + (W_{\sigma \sigma} x^*_\sigma + b_\sigma)]_+.
 \end{eqnarray*}
The existence of the fixed point $(y=0,z=0)$ implies that
$W_{\bsigma \sigma} x^*_\sigma + b_\bsigma \leq 0 $ and $W_{\sigma \sigma} x^*_\sigma + b_\sigma > 0.$
If we further assume that $W_{\bsigma \sigma} x^*_\sigma + b_\bsigma < 0 $, then there exists an open neighborhood of the origin for which the sign of each of the thresholded terms is determined by the constant terms (those that do not involve $y$ or $z$).  In this neighborhood, we can simplify the thresholds and, using~\eqref{eq:fixedpt}, the equations take the form,
\vspace{-.1in}
 \begin{eqnarray*}
\dot{y} &=& -D_\bsigma y\\
\dot{z} &=& -D_\sigma z + W_{\sigma \bsigma} y + W_{\sigma \sigma} z.
 \end{eqnarray*}
Because the system is exactly linear in a neighborhood of the fixed point, $x^*$ is asymptotically stable if and only if the matrix 
$$M = \mtx{-D_\bsigma}{0}{W_{\sigma \bsigma}}{-D_\sigma+W_{\sigma \sigma}}$$
is stable.  Similarly, $x^*$ is stable but not asymptotically stable if and only if $M$ is marginally stable, and $x^*$ is an unstable 
fixed point if and only if $M$ is unstable.  
Finally, note that the stability of $M$ is equivalent to the stability of $-D_\sigma+W_{\sigma \sigma} = (-D+W)_\sigma$.

We collect these observations into the following
characterization of fixed points in threshold-linear networks:

\begin{proposition}\label{prop:stable-set}
Consider the system~\eqref{eq:thresh},
for a threshold-linear network $(D,W)$ on $n$ neurons with fixed input $b$, and let
$\sigma \subset \{1,...,n\}$ be a subset of neurons. The following statements hold:
\begin{itemize}
\item[(i)] A point $x^*$ with $\supp(x^*) = \sigma$ is a fixed point if and only if $x^*_\sigma$ satisfies:
\begin{itemize}
\item[(a)] $(D_\sigma - W_{\sigma\sigma})x^*_\sigma = b_\sigma$, 
\item[(b)] $x^*_\sigma>0$, and
\item[(c)] $b_\bsigma \leq -W_{\bsigma \sigma} x^*_\sigma .$
\end{itemize}
In particular, if $\det(D_\sigma-W_{\sigma \sigma}) \neq 0$, then there exists at most one fixed point with support $\sigma$, and it is given by $x^*_\sigma = (D_\sigma - W_{\sigma\sigma})^{-1}b_\sigma.$
\item[(ii)] Suppose $x^*$ is a fixed point with $\supp(x^*) = \sigma$.  If $b_\bsigma < -W_{\bsigma\sigma}x^*_\sigma$, then $x^*$ is asymptotically stable if and only if the principal submatrix $(-D+W)_\sigma$ is stable.  Similarly, $x^*$ is stable but not asymptotically stable if and only if $(-D+W)_\sigma$ is marginally stable, and $x^*$ is an unstable fixed point if and only if $(-D+W)_\sigma$ is unstable.
\end{itemize}
\end{proposition}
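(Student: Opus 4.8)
The statement is essentially a bookkeeping consequence of the block computation carried out just above the Proposition; the plan is to separate it cleanly into the two parts and pin down exactly where each hypothesis is used.

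For part (i), I would start from an arbitrary point $x^*$ with $\supp(x^*)=\sigma$, so that $x^*_{\bsigma}=0$ and $x^*_\sigma>0$ (the latter being condition (b), which holds because $\sigma$ is exactly the support). Substituting this into the $\bsigma$- and $\sigma$-blocks of~\eqref{eq:thresh} with $\dot x=0$, the $\bsigma$-block collapses to $0=[W_{\bsigma\sigma}x^*_\sigma+b_{\bsigma}]_+$, which is equivalent to $W_{\bsigma\sigma}x^*_\sigma+b_{\bsigma}\le 0$, i.e. condition (c); the $\sigma$-block reads $D_\sigma x^*_\sigma=[W_{\sigma\sigma}x^*_\sigma+b_\sigma]_+$, and since its left-hand side is strictly positive (using (b) together with the strict positivity of the diagonal of $D_\sigma$), the threshold can be removed to yield $(D_\sigma-W_{\sigma\sigma})x^*_\sigma=b_\sigma$, which is (a). Conversely, (a)--(c) make $\dot x$ vanish at $x^*$. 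The uniqueness clause is then immediate, since (a) is a linear system in $x^*_\sigma$ whose coefficient matrix $D_\sigma-W_{\sigma\sigma}$ is assumed invertible.

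For part (ii), I would recenter at the fixed point by setting $(y,z):=x-x^*$ and seek a neighborhood of the origin on which the dynamics is honestly linear. This is where the strict hypothesis $b_{\bsigma}<-W_{\bsigma\sigma}x^*_\sigma$ is needed: the constant term of the $\bsigma$-threshold equals $W_{\bsigma\sigma}x^*_\sigma+b_{\bsigma}<0$, so on a small enough ball the entire threshold argument stays negative and that term is identically zero; and the constant term of the $\sigma$-threshold equals $D_\sigma x^*_\sigma>0$ by (a), so nearby the $\sigma$-threshold is simply removed. On this ball the system becomes exactly $\frac{d}{dt}(y,z)=M(y,z)$ with $M=\mtx{-D_{\bsigma}}{0}{W_{\sigma\bsigma}}{-D_\sigma+W_{\sigma\sigma}}$, so — the vector field being genuinely linear, not merely approximated by its linearization — the stability type of $x^*$ equals that of $M$ as a matrix. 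Finally, $M$ is block lower-triangular, so $\Spec(M)=\Spec(-D_{\bsigma})\cup\Spec(-D_\sigma+W_{\sigma\sigma})$; the first spectrum consists of the numbers $-1/\tau_i$, $i\in\bsigma$, which lie strictly in the open left half-plane, so the eigenvalues of $M$ with nonnegative real part are precisely those of $(-D+W)_\sigma=-D_\sigma+W_{\sigma\sigma}$. Hence $M$ is stable, marginally stable, or unstable exactly when $(-D+W)_\sigma$ is, which gives (ii).

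The point requiring the most care is the claim that the dynamics is linear \emph{on an open neighborhood} of $x^*$ rather than merely to first order: that is exactly what the strict inequality buys, and it also explains why the boundary case $b_{\bsigma}=-W_{\bsigma\sigma}x^*_\sigma$ (some coordinate on the fence) is excluded — there the relevant threshold can switch arbitrarily close to $x^*$, the vector field need not even be differentiable, and the eigenvalue criterion may fail. A secondary, minor point is the translation between ``$M$ marginally stable as a matrix'' and ``$x^*$ stable but not asymptotically stable'' for the linear flow, which in general also requires the purely imaginary eigenvalues to be semisimple; for the block-triangular $M$ this concerns only the block $(-D+W)_\sigma$, so it introduces nothing beyond what is already implicit in the definition of a marginal set, and I would simply note this.
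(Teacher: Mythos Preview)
Your proposal is correct and follows essentially the same route as the paper: the same block decomposition for (i), the same recentering $(y,z)=x-x^*$ and use of the strict inequality to linearize exactly on a neighborhood for (ii), and the same block lower-triangular matrix $M$ whose stability reduces to that of $(-D+W)_\sigma$. Your explicit remark that the dynamics is genuinely linear near $x^*$ (not just linearized), and your caveat about semisimplicity of the purely imaginary eigenvalues in the marginal case, are careful additions that the paper's discussion leaves implicit.
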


\noindent Using this Proposition, we can now prove Theorem~\ref{thm:stable-clique}.

\begin{proof}[{\bf Proof of Theorem~\ref{thm:stable-clique}}]
We begin with the first statement.  ($\Rightarrow$) Let $\sigma$ be a stable set of $(W,D)$, and choose $b$ such that there exists an asymptotically stable fixed point $x^*$ of ~\eqref{eq:thresh}  with $\supp(x^*) = \sigma$.  By part (i) of Proposition~\ref{prop:stable-set} it is clear that we can choose $b$ such that $b_\bsigma < - W_{\bsigma \sigma} x^*_\sigma$.
It then follows from part (ii) that $(-D+W)_\sigma$ is stable.

$(\Leftarrow)$ Now suppose that $(-D+W)_\sigma$ is stable.  We construct $b$, the input vector
  in~\eqref{eq:thresh}, so that the corresponding fixed point with support $\sigma$ is asymptotically stable.
 Let $b_\sigma = (D-W)_\sigma 1_{\sigma}$, where
  $1_\sigma$ is the vector of all ones.  Letting $x^*$ be the firing rate
  vector with $\supp(x^*) = \sigma$ and $x^*_\sigma = 1_\sigma > 0$, we choose $b_\bsigma$
  to satisfy $b_\bsigma < -W_{\bsigma\sigma} x^*_\sigma$.   Note that $(D_\sigma - W_{\sigma\sigma})x_\sigma^* = b_\sigma$.  For this choice of $b$, it thus follows from part (i) of Proposition~\ref{prop:stable-set} that $x^*$ is a fixed point, and by part (ii) that $x^*$ is asymptotically stable.  Hence, $\sigma$ is a stable set of $(W,D)$.

Similar arguments using Proposition~\ref{prop:stable-set} can be used to show that $\sigma$ is a marginal or unstable set of $(W,D)$ if and only if $(-D+W)_\sigma$ is marginally stable or unstable, respectively.  
\end{proof}

As a result of Theorem~\ref{thm:stable-clique}, we see that in order to investigate
stable, unstable, or marginal sets of neurons in threshold-linear networks we need to understand the stability of
principal submatrices.

%%%%%%%%%%%%%%%%%% SECTION 3 %%%%%%%%%%%%%%%%%%%%%%%%%%%%%%%%%%%%

\vspace{.1in}
\section{Matrix-theoretic results}\label{sec:matrices}

In this section, we prove results 
concerning real matrices with strictly negative entries on the
diagonal.  These results will be critical for
Section~\ref{sec:max-flex}, where we prove our main
theorems regarding maximally flexible networks.  This section is
self-contained, however, and the results can be understood
independently of the context of neural networks.

Throughout this section, $A$ is an $n\times n$ matrix with real
coefficients and strictly negative entries on the diagonal.  
The matrix $\sE_{A}=(\epsilon_{ij})$ is the sign matrix associated to $A$; 
this is a matrix whose entries $\epsilon_{ij} \in \{\pm 1, 0\}$ are the signs
of the corresponding entries of $A$.
To the matrix $A$ we also associate the graph $G_A$, which we call the {\em
  connectivity graph} of $A$.  It is the simple graph that includes each edge $(ij)$ unless $A_{ij} = A_{ji} = 0$.
Let $X(G_A)$ be the clique complex associated to the graph $G_A$; we
call this the {\em clique complex associated to} the matrix $A$.  Note that a clique complex
is an abstract simplicial complex.  For any simplicial complex $X$ and abelian group $\mathcal{G}$, we denote the associated simplicial homology and cohomology groups as $H_i(X; \mathcal{G})$ and $H^i(X;\mathcal{G})$, respectively.  Finally, mirroring Definition~\ref{def:net-flex}, we call an $n \times n$ matrix $\bar{A}$ a {\em completion} of  $A$ if $\bar{A}_{ij} = A_{ij}$ for all $i = j$ and all distinct pairs $(ij) \in G_A$.

\subsection{Bipartite matrices}\label{ssec:bm}

Bipartite matrices play an important role in Section~\ref{ssec:rank1}.

\begin{definition}[bipartite matrix]\label{defn:bipartite} We say that a real-valued, $n \times n$ matrix $A$ is a {\em bipartite matrix}
if the index set $\{1,...,n\}$ can be partitioned into two disjoint sets, $\sigma$ and $\bar\sigma$, such that:
\begin{enumerate}
\item if $i \in \sigma$ and $j \in \bar\sigma$, both $A_{ij} \geq 0$ and $A_{ji} \geq 0$.
\item if $i,j \in \sigma$ or $i,j \in \bar\sigma$, both $A_{ij} \leq 0$ and $A_{ji} \leq 0$.
\end{enumerate}
\end{definition}

\noindent This definition is equivalent to the condition that there exists
a permutation of the indices $\{1,...,n\}$ such that the sign pattern
of $A$ takes on the block-form:
\begin{equation}\label{eq:block-form}
\mathrm{sgn}(A) = \left(\begin{array}{c|c} - & + \\ \hline + & - \end{array}\right), 
\end{equation}
where ``$+$'' indicates a submatrix with all nonnegative entries, and ``$-$'' a submatrix with all nonpositive entries.  From this observation it is easy to see that all rank $1$ matrices with negative diagonal are bipartite:

\begin{lemma}\label{lemma:signpattern}
Let $A$ be a real $n \times n$ matrix with $\rank A = 1$ and $A_{ii}<0$ for all $i=1,...,n$.  Then there exists a permutation of the indices such that the sign pattern of $A$ is of the form~\eqref{eq:block-form}.
In particular, $A$ is a bipartite matrix.
\end{lemma}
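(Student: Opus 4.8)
The plan is to show that a rank 1 matrix $A$ with negative diagonal has, after a suitable permutation of indices, the block sign pattern~\eqref{eq:block-form}. Since $\rank A = 1$, we can write $A = uv^T$ for some nonzero column vectors $u, v \in \RR^n$, so that $A_{ij} = u_i v_j$. First I would observe that the diagonal condition $A_{ii} = u_i v_i < 0$ forces $u_i \neq 0$ and $v_i \neq 0$ for every $i$, and moreover $u_i$ and $v_i$ have opposite signs for each $i$. Hence we may define a partition of $\{1,\dots,n\}$ by
$$\sigma = \{i \mid u_i > 0\}, \qquad \bar\sigma = \{i \mid u_i < 0\},$$
noting that this is equivalent to $\sigma = \{i \mid v_i < 0\}$ and $\bar\sigma = \{i \mid v_i > 0\}$.

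Next I would verify the two sign conditions of Definition~\ref{defn:bipartite} directly from $A_{ij} = u_i v_j$. If $i,j \in \sigma$, then $u_i > 0$ and $v_j < 0$, so $A_{ij} = u_i v_j < 0 \le 0$; similarly $A_{ji} = u_j v_i < 0 \le 0$, since $u_j > 0$ and $v_i < 0$. The case $i, j \in \bar\sigma$ is symmetric, giving $A_{ij}, A_{ji} \le 0$ (in fact $< 0$). If $i \in \sigma$ and $j \in \bar\sigma$, then $u_i > 0$ and $v_j > 0$, so $A_{ij} = u_i v_j > 0 \ge 0$, while $u_j < 0$ and $v_i < 0$ give $A_{ji} = u_j v_i > 0 \ge 0$. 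This establishes that $A$ is a bipartite matrix with respect to the partition $(\sigma, \bar\sigma)$. Finally, permuting the indices so that the elements of $\sigma$ come first and those of $\bar\sigma$ come second puts $\mathrm{sgn}(A)$ into the block form~\eqref{eq:block-form}, as the diagonal blocks (indexed by $\sigma \times \sigma$ and $\bar\sigma \times \bar\sigma$) are nonpositive and the off-diagonal blocks are nonnegative.

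There is essentially no obstacle here: the argument is a short computation once the rank 1 factorization $A = uv^T$ is in hand. The only point requiring a word of care is the reduction to a \emph{single} partition — one must check that the sign pattern of $u$ and the sign pattern of $v$ are related by an overall sign flip coordinatewise (which is exactly what $u_i v_i < 0$ for all $i$ gives), so that the partition induced by $u$ and the partition induced by $v$ coincide; without this one might worry the two conditions in Definition~\ref{defn:bipartite} refer to incompatible partitions. The block form~\eqref{eq:block-form} then follows immediately, and the "in particular" clause of the lemma is just the definition of bipartite, so nothing further is needed.
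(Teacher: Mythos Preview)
Your proof is correct. The paper does not actually give a proof of this lemma; it simply states it after remarking ``From this observation it is easy to see that all rank $1$ matrices with negative diagonal are bipartite,'' treating it as immediate. Your argument via the factorization $A = uv^T$ and the partition by the sign of $u_i$ is exactly the natural way to make this precise, and is almost certainly what the authors had in mind.
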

%\begin{proof}
%Recall that $\rank A = 1$ implies $A = uv^T$ for $u,v \in \RR^n$. 
%Since $A_{ii} = u_i v_i<0$, $u_i$ and $v_i$ are nonzero with opposite
%  signs. This means that there exists a permutation of the index set $\{1,...,n\}$ such 
%  that  $u_1, \ldots, u_k < 0$ while $ u_{k+1}, \ldots, u_{n} >0,$ and
%  $ v_1, \ldots, v_k >0$ while $  v_{k+1}, \ldots, v_{n} <0.$
% It follows that the matrix $A$ has the sign pattern~\eqref{eq:block-form}.
%\end{proof}

\noindent The following result gives a sufficient condition for bipartiteness of a matrix in terms of the associated clique complex.
\begin{lemma}[bipartite lemma] \label{lemma:bipartite1} Let $A$ be a
  real-valued $n \times n$ matrix with strictly negative diagonal, sign matrix $\E_A =
  (\epsilon_{ij})$, connectivity graph $G_A$, and clique complex
  $X(G_A)$. Suppose that
\begin{enumerate}
\item[(i)] $\epsilon_{ij}\epsilon_{ji} = 1$, whenever $(ij) \in X(G_A)$, 
\item[(ii)] $\epsilon_{ij}\epsilon_{jk}\epsilon_{ki} = -1$, whenever $(ijk) \in X(G_A)$, and
\item[(iii)]  $H^1(X(G_A);\ZZ_2) = 0.$
\end{enumerate}
Then $A$ is a bipartite matrix.
\end{lemma}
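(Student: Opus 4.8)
The plan is to read hypotheses (i)--(ii) as local sign data on the clique complex $X(G_A)$, to encode this data as a $\ZZ_2$-valued simplicial $1$-cochain, and to use hypothesis (iii) to integrate it to a global two-coloring; the two-coloring is then exactly the partition $\{1,\dots,n\} = \sigma \sqcup \bar\sigma$ required by Definition~\ref{defn:bipartite}. The point is that a bipartition realizing the block form~\eqref{eq:block-form} is the same thing as a function $f \colon \{1,\dots,n\} \to \ZZ_2$ such that, for each edge $(ij)$ of $G_A$, the parity $f(i)+f(j)$ records whether the entries $A_{ij}, A_{ji}$ are positive (if $f(i)\ne f(j)$) or negative (if $f(i)=f(j)$). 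Working with $\ZZ_2$ coefficients keeps this bookkeeping clean: simplicial cochains carry no orientations, and $H^1(X(G_A);\ZZ_2)=0$ is precisely the statement that every $1$-cocycle is a coboundary.

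First I would observe that (i) forces $\epsilon_{ij} = \epsilon_{ji} \in \{+1,-1\}$ for every edge $(ij)$ of $G_A$ --- the product being $+1$ excludes the value $0$ --- so each $1$-simplex of $X(G_A)$ carries a well-defined, symmetric sign $s(ij) := \epsilon_{ij} = \epsilon_{ji}$. Define $c \in C^1(X(G_A);\ZZ_2)$ by $c(ij)=0$ when $s(ij)=-1$ and $c(ij)=1$ when $s(ij)=+1$. I would then check that $c$ is a cocycle: on a $2$-simplex $(ijk)$ of $X(G_A)$, hypotheses (i) and (ii) give $s(ij)\,s(jk)\,s(ik) = \epsilon_{ij}\epsilon_{jk}\epsilon_{ki} = -1$, so an odd number of the three signs equal $-1$, hence an even number equal $+1$, hence an even number of $c(ij),c(jk),c(ik)$ equal $1$, hence $(\delta c)(ijk) = c(ij)+c(jk)+c(ik) = 0$ in $\ZZ_2$. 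Since the $0$-skeleton of a clique complex is all of $V$, hypothesis (iii) then produces $f \colon \{1,\dots,n\} \to \ZZ_2$ with $\delta f = c$, i.e.\ $f(i)+f(j) = c(ij)$ for every edge $(ij)$ of $G_A$.

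To conclude I would set $\sigma = f^{-1}(0)$ and $\bar\sigma = f^{-1}(1)$ and verify the two clauses of Definition~\ref{defn:bipartite}. When $i,j$ lie on the same side (including $i=j$, where $A_{ii}<0$): if $(ij)$ is not an edge then $A_{ij}=A_{ji}=0$ and nothing is required, while if $(ij)$ is an edge then $f(i)+f(j)=0$ forces $c(ij)=0$, so $s(ij)=-1$ and $A_{ij},A_{ji}\le 0$. When $i,j$ lie on opposite sides: non-edges again impose nothing, and for an edge $f(i)+f(j)=1$ gives $c(ij)=1$, so $s(ij)=+1$ and $A_{ij},A_{ji}\ge 0$. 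Thus $A$ is bipartite. The only real content of the argument is the translation step --- checking that (i) is exactly the well-definedness of $c$ and that (ii) is exactly the $\ZZ_2$-cocycle condition $\delta c = 0$ --- after which (iii) does all the work; there is essentially no computation beyond the mod-$2$ parity count used to show $c$ is a cocycle.
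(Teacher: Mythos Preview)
Your proof is correct and is essentially the same as the paper's: both encode the edge signs as a $\ZZ_2$-valued $1$-cochain, use (ii) to verify the cocycle condition, and invoke (iii) to write it as a coboundary yielding the bipartition. The only difference is cosmetic---you work with additive $\ZZ_2=\{0,1\}$ while the paper uses the multiplicative group $\ZZ_2=\{\pm 1\}$ and writes the cochain directly as $\{-\epsilon_{ij}\}$.
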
  

\begin{proof}
It is convenient to think of $\ZZ_2 = \{\pm 1\}$, the multiplicative
group with two elements. Consider the co-chain complex, with $X=X(G_A)$,
\begin{equation}\label{chain-z2}
%0 \longrightarrow {\ZZ_2 }\stackrel{\delta_{-1}}\longrightarrow 
{\mathcal C}^{0}(X; \ZZ_2) \stackrel{\delta_0}\longrightarrow {\mathcal C }^1(X;\ZZ_2) \stackrel{\delta_1}{\longrightarrow}  {\mathcal C }^2(X;\ZZ_2)  \stackrel{\delta_2}{\longrightarrow} \cdots  \stackrel{\delta_{n-1}}{\longrightarrow} {\mathcal C }^n(X;\ZZ_2) \stackrel{\delta_n}{\longrightarrow} 0.
\end{equation}
The maps are the usual coboundary operators.  For example, $\delta_0 (\{v_i\}) = \{ e_{ij}\}$, where $e_{ij} = v_j v_i^{-1} = v_iv_j$.  Similarly, $\delta_1 (\{e_{ij}\}) = \{f_{ijk}\}$, where 
$f_{ijk} = e_{jk}e_{ik}^{-1}e_{ij} = e_{ij}e_{jk}e_{ki}.$

By (i) we have $\{\epsilon_{ij}\} \in {\mathcal C }^1(X;\ZZ_2) $,
while (ii) implies that $\{-\epsilon_{ij}\} \in \Ker \delta_1$
(we need the minus sign because kernel elements map to $+1$).  Using
(iii), we conclude that $\{-\epsilon_{ij}\} \in \Im \delta_0$.  Hence, there exists a vertex labeling $\{\nu_i\} \in {\mathcal C}^{0}(X; \ZZ_2)$ such that $-\epsilon_{ij} = \nu_i\nu_j$ whenever $(ij) \in X$.  Let $\sigma = \{i \:|\: \nu_i = +1\}$ and $\bar\sigma = \{i \:|\: \nu_i = -1\}$, with $\sigma \cup \bar\sigma = \{1,...,n\}$.  The sign of an edge, $\epsilon_{ij} = -\nu_i\nu_j$, can only be positive if $i \in \sigma$ and $j \in \bar\sigma$ or if $i \in \bar\sigma$ and $j \in \sigma$, and $\epsilon_{ij}$ can only be negative if $i,j \in \sigma$ or $i,j \in \bar\sigma$.  This proves that the matrix $A$ is bipartite.
\end{proof}

\begin{example}  To see why we need the cohomology condition in cases where there are zero entries, consider the matrix
\begin{small}
$$A = \left(\begin{array}{ccccc}-1 & 1 & 0 & 0 & 1 \\ 1 & -1 & 1 & 0 & 0\\ 0 & 1 & -1 & 1 & 0\\ 0 & 0 & 1 & -1 & 1\\ 1 & 0 & 0 & 1 & -1\end{array}\right).$$
\end{small}

\noindent
The graph $G_A$ is a cycle on 5 vertices, and A is not a bipartite matrix.
Nevertheless, $A$ satisfies conditions (i) and (ii).  
\end{example}

When the matrix $A$ has no zero entries, however, this kind of example cannot occur.  The clique complex of the complete graph, $X(K_n)$, is contractible, so $H^{1}(X(K_n);\ZZ_2) = 0$ and we obtain the following corollary:

\begin{corollary} \label{cor:bipartite}
Let $A$ be a real-valued, $n \times n$ matrix with strictly negative diagonal and sign matrix $\E_A = (\epsilon_{ij})$ with all entries $\epsilon_{ij}$ nonzero.  Suppose that
(i) $\epsilon_{ij}\epsilon_{ji} = 1$, for all $(ij)$, and
(ii) $\epsilon_{ij}\epsilon_{jk}\epsilon_{ki} = -1$, for all distinct triples $(ijk)$.
Then $A$ is a bipartite matrix.
\end{corollary}

%\begin{proof}
%  Because the connectivity graph $G_A$ is 
%  the complete graph $K_n$, the corresponding clique complex is a contractible
%  simplicial complex, so condition (iii) from
%  Lemma~\ref{lemma:bipartite1} is automatically satisfied.
%\end{proof}

\subsection{Vanishing principal minors and rank}\label{ssec:rank1}

Recall that, given an $n \times n$ matrix $A$ and a subset $\sigma \subset
  \{1,...,n\}$, the {\em principal submatrix} $A_\sigma$ is the square
  matrix obtained by restricting $A$ to the index set $\sigma$.
  The determinant of a principal submatrix is called a {\em principal minor}.

It is well-known that if the rank of $A$ is $k$, then all principal
minors larger than $k\times k$ must vanish.  The converse is not true.
For example, consider a strictly upper-triangular $n \times n$ matrix.
It can have rank up to $n-1$, yet each and every principal minor vanishes.
In the case of matrices with strictly negative diagonal entries, however, we do
have a kind of converse; this is the subject of Lemma~\ref{lemma:rank1} and 
Proposition~\ref{prop:sparse-rank1}, the main result in
  this section.  We begin with a simple technical lemma that
will be used throughout.

\begin{lemma}\label{lemma:zero-minors}
 Let $A$ be a real-valued $n \times n$ matrix with strictly negative diagonal
 entries and clique complex $X(G_A)$.  Suppose that
 all $2 \times 2$ and $3 \times 3$ principal minors corresponding to cliques in $X(G_A)$ vanish.
 Then the matrix $A$ has symmetric sign matrix $\sE_A$, and its entries satisfy
 \begin{eqnarray}
  A_{ii}A_{jj} &=& A_{ij}A_{ji}, \;\;\; \mathrm{for} \;\;\; (ij) \in X(G_A), \;\;\; \mathrm{and} \label{eq:2by2}\\
  A_{ii}A_{jj}A_{kk} &=& A_{ij}A_{jk}A_{ki}, \;\;\; \mathrm{for} \;\;\; (ijk) \in X(G_A).  \label{eq:3by3}
\end{eqnarray}
\end{lemma}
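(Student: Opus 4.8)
The plan is to read off both displayed identities directly from the vanishing of the small principal minors, using the strict negativity of the diagonal only to pin down the sign pattern.

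First I would dispose of the $2 \times 2$ minors. For a pair $(ij) \in X(G_A)$ the principal minor on $\{i,j\}$ equals $A_{ii}A_{jj} - A_{ij}A_{ji}$, so its vanishing is literally \eqref{eq:2by2}. Since $A_{ii}, A_{jj} < 0$, the left-hand side is strictly positive, hence $A_{ij}A_{ji} > 0$; in particular both entries are nonzero and of equal sign, so $\epsilon_{ij} = \epsilon_{ji}$. For a pair not in $X(G_A)$ --- equivalently, not an edge of $G_A$ --- both $A_{ij}$ and $A_{ji}$ vanish by the definition of $G_A$, so again $\epsilon_{ij} = \epsilon_{ji}$. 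This settles symmetry of $\sE_A$.

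Next I would expand the $3 \times 3$ minor of a clique $(ijk) \in X(G_A)$ along its first row. Because a clique complex is closed under passing to faces, the pairs $(ij), (jk), (ik)$ all lie in $X(G_A)$, so \eqref{eq:2by2} is available for each of them (and all six off-diagonal entries involved are nonzero). The cofactor multiplying $A_{ii}$ is $A_{jj}A_{kk} - A_{jk}A_{kj}$, which is zero by \eqref{eq:2by2} for $(jk)$. Substituting $A_{ij}A_{ji} = A_{ii}A_{jj}$ and $A_{ik}A_{ki} = A_{ii}A_{kk}$ into the two surviving terms, and writing $p = A_{ij}A_{jk}A_{ki}$, $q = A_{ji}A_{kj}A_{ik}$, $m = A_{ii}A_{jj}A_{kk}$, the vanishing of the minor collapses to the single relation $p + q = 2m$.

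The last step --- and the only one that is not purely mechanical --- is to upgrade $p + q = 2m$ to $p = q = m$, which is exactly \eqref{eq:3by3}. For this I would multiply \eqref{eq:2by2} over the three edges of the triangle: $pq = (A_{ij}A_{ji})(A_{jk}A_{kj})(A_{ki}A_{ik}) = (A_{ii}A_{jj})(A_{jj}A_{kk})(A_{kk}A_{ii}) = m^2$. Then $(p-q)^2 = (p+q)^2 - 4pq = 4m^2 - 4m^2 = 0$, so $p = q$ and hence $p = q = m$. I expect the main (mild) obstacle to be precisely this move: one must notice that the auxiliary identity $pq = m^2$ is what forces $p$ and $q$ --- a priori just two numbers with prescribed sum --- to both equal $m$. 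Everything else is bookkeeping with cofactor expansions and keeping track of which sub-cliques of $(ijk)$ lie in $X(G_A)$.
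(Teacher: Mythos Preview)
Your proof is correct and follows essentially the same route as the paper. Both arguments reduce the vanishing $3\times 3$ minor (after using \eqref{eq:2by2}) to the pair of relations $p+q=2m$ and $pq=m^2$, and then observe that these force $p=q=m$; the paper phrases this as $2xy = x^2 + y^2$ with $x=m$, $y=p$, while you use $(p-q)^2 = (p+q)^2 - 4pq$, which is the same identity.
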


\begin{proof}
The first set of relations~\eqref{eq:2by2} is obvious, and ensures
that $\sE_A$ is symmetric and $A_{ij} \neq 0$ for all $(ij) \in X(G_A)$.  
This, together with the vanishing of $3 \times 3$ principal minors, yields:
$$2A_{ii}A_{jj}A_{kk} = A_{ij}A_{jk}A_{ki} + A_{kj}A_{ji}A_{ik}, \;\;\; \mathrm{for} \;\;\; (ijk) \in X(G_A).$$
Given a triple $(ijk) \in X(G_A)$, denote $x =
A_{ii}A_{jj}A_{kk}$ and $y = A_{ij}A_{jk}A_{ki},$ and note that $y
\neq 0$.  Using again~\eqref{eq:2by2}, we can write
$A_{kj}A_{ji}A_{ik} = x^2/y.$ The above $3 \times 3$ condition becomes
$2xy = y^2 + x^2,$ and we conclude that $x = y$, which yields~\eqref{eq:3by3}.
\end{proof}

We can now show that the vanishing of all $2\times 2$ and $3\times 3$
principal minors suffices to guarantee that $A$ has rank $1$.

\begin{lemma}[rank 1]\label{lemma:rank1}
Let $A$ be a real-valued, $n \times n$ matrix with strictly negative diagonal entries such that
 all $2 \times 2$ and $3 \times 3$ principal minors vanish.
Then $G_A$ is the complete graph, $A$ is a bipartite matrix, and $\rank A = 1$.
\end{lemma}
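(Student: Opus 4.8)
The plan is to first deduce that $G_A = K_n$, then invoke Lemma~\ref{lemma:zero-minors}, then exhibit an explicit rank-$1$ factorization of $A$, and finally read off bipartiteness from Lemma~\ref{lemma:signpattern}.

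First I would show that $G_A$ is the complete graph. For any $i \neq j$, the vanishing of the $2 \times 2$ principal minor indexed by $\{i,j\}$ gives $A_{ij}A_{ji} = A_{ii}A_{jj}$; since $A_{ii},A_{jj} < 0$ this product is strictly positive, so $A_{ij} \neq 0$ and $A_{ji} \neq 0$. Hence every edge is present and $G_A = K_n$, so that $X(G_A) = X(K_n)$ contains every subset of $\{1,\dots,n\}$.

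Because all $2 \times 2$ and $3 \times 3$ principal minors vanish — and in particular those indexed by cliques of $X(G_A)$ — Lemma~\ref{lemma:zero-minors} applies, giving the relations $A_{ii}A_{jj} = A_{ij}A_{ji}$ for all $i \neq j$ and $A_{ii}A_{jj}A_{kk} = A_{ij}A_{jk}A_{ki}$ for all distinct triples $i,j,k$. Fix the index $1$ and set $v_j = A_{1j}$ for all $j$ and $u_i = A_{i1}/A_{11}$ for all $i$ (so $u_1 = 1$, $v_1 = A_{11}$, and $u_1 v_1 = A_{11}$). The key step is to verify that $A_{ij} = u_i v_j$ for all $i,j$. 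This is immediate when $i = 1$ or $j = 1$; for $i,j \neq 1$ it amounts to checking $A_{11}A_{ij} = A_{i1}A_{1j}$. Substituting the two-index relations $A_{i1} = A_{11}A_{ii}/A_{1i}$ and $A_{1j} = A_{11}A_{jj}/A_{j1}$, together with the three-index relation $A_{ij} = A_{11}A_{ii}A_{jj}/(A_{1i}A_{j1})$ applied to the triple $\{1,i,j\}$ (all denominators nonzero since $G_A = K_n$), one sees that both sides equal $A_{11}^2 A_{ii}A_{jj}/(A_{1i}A_{j1})$. Thus $A_{ij} = u_i v_j$ for all $i,j$, so $\rank A \leq 1$, and $\rank A = 1$ since $A_{11} \neq 0$.

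Finally, since $A$ is a rank-$1$ matrix with strictly negative diagonal, Lemma~\ref{lemma:signpattern} shows that after a permutation of indices its sign pattern takes the block form~\eqref{eq:block-form}, i.e.\ $A$ is a bipartite matrix (alternatively one could derive bipartiteness from the symmetry of $\sE_A$ and the sign relations via Corollary~\ref{cor:bipartite}). The only real content is the short algebraic identity in the rank-$1$ step; the completeness of $G_A$ and the bipartiteness conclusion are then essentially bookkeeping.
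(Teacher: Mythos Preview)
Your proof is correct and follows essentially the same route as the paper: show $G_A=K_n$ from the $2\times 2$ minors, invoke Lemma~\ref{lemma:zero-minors}, anchor on the first row and column to obtain $A_{ij}=A_{i1}A_{1j}/A_{11}$ (the paper writes $A=(A_{11})^{-1}uv$ with $u$ the first column and $v$ the first row), and conclude bipartiteness via Lemma~\ref{lemma:signpattern}. The only cosmetic point is that your three-index step assumes $i\neq j$; the diagonal case $i=j\neq 1$ is covered directly by the two-index relation $A_{i1}A_{1i}=A_{11}A_{ii}$.
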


\begin{proof}
Since all $2\times 2$ principal minors vanish, it follows that 
for each $i \neq j$ we have
$ A_{ij} A_{ji} = A_{ii} A_{jj}$ $\neq 0$, and $G_A$ is therefore the complete graph.
By Lemma~\ref{lemma:zero-minors}, the relations
\eqref{eq:2by2} and \eqref{eq:3by3} are satisfied for all pairs and
triples of distinct indices.  Anchoring ourselves on
the first row and column, we find that any entry of the matrix can be
written as:
$$A_{ij} = \dfrac{A_{ii}A_{jj}A_{11}}{A_{j1}A_{1i}} = \dfrac{A_{ii}A_{1j}}{A_{1i}} = \dfrac{A_{i1}A_{1j}}{A_{11}}.$$
Let $u = (A_{11},A_{21},...,A_{n1})^T$ be the first column vector of $A$ and $v = (A_{11},A_{12},...,A_{1n})$ the first row vector.  
Then $A = (A_{11})^{-1} u v$, which is manifestly rank 1.  It follows from Lemma~\ref{lemma:signpattern} that $A$ is bipartite.
\end{proof}

Can we generalize this result for
matrices with zeroes -- i.e., for matrices $A$ such that $G_A$ is not the complete graph?  Here we are looking for conditions that ensure the matrix $A$ has a rank 1 {\em completion}, 
where the entries with zeroes are treated as ``unknown'' entries that can be completed to any value.
In this case, we can require only that all $2 \times 2$ and $3 \times 3$ principal minors in the clique complex $X(G_A)$ vanish.  The following example shows that such a requirement is insufficient to guarantee the existence of a rank 1 completion.

\begin{example}\label{ex:top}  Consider the matrix
$$A = \left(\begin{array}{cccc} -1 & a & 0 & 1/d\\ 1/a & -1 & b & 0\\ 0 & 1/b & -1 & c\\ d & 0 & 1/c & -1 \end{array}\right).$$
This matrix has $G_A = (V,E)$, where $V = \{1,2,3,4\}$ and $E =
\{(12),(23),(34),(41)\}$.  $G_A$ is a cycle on $4$ vertices, and
the clique complex $X(G_A) = G_A$ since there are no $2$-dimensional faces.  Note that all $2 \times 2$ principal minors 
corresponding to $2$-cliques in $X(G_A)$ vanish, and there are no $3\times 3$ ones to check.  Does this matrix have a rank $1$ completion?  Generically, the answer is ``No."  In fact, it is easy to see that a rank 1 completion exists if and only if $abcd = 1$.  
\end{example}

The intuition we gain from this example is that there is a topological
obstruction to a matrix having a rank $1$ completion.  It is the
presence of a closed but hollow cycle in $X(G_A)$ that prevents $A$
from having a rank $1$ completion.  In fact, if we added or removed an
edge from the graph $G_A$ in Example~\ref{ex:top}, we would have a
rank 1 completion without any further condition other than the
vanishing of $2\times 2$ and $3 \times 3$ principal minors $\det A_\sigma$ for $\sigma \in X(G_A)$.  
The following Proposition gives topological
conditions that guarantee the existence of a rank $1$ completion.
Note that a condition ensuring that $A$
is bipartite is needed to show that $A$ (as opposed to only $|A|$)
has a rank $1$ completion.

%\rev{NOTE: There are a few cosmetic changes to the following proposition, to get rid of an obvious redundancy in our assumptions.  None of the content in either the statement or the proof has been altered.}

\begin{proposition}\label{prop:sparse-rank1}
Let $A$ be a real-valued $n \times n$ matrix with strictly negative diagonal and clique complex
$X(G_A).$ Let $|A|$ denote the matrix of absolute values of $A$.  Suppose that
%\begin{enumerate}
%\item[(i)] $A_{ii} < 0$ for all $i = 1,...,n$, and
%\item[(ii)] 
$\det A_\sigma = 0$ for all $\sigma \in X(G_A)$ such that $|\sigma| = 2$ or $3$.
%\end{enumerate}
Then,
\begin{enumerate}
\item[(a)] $H^1(X(G_A);\ZZ_2) = 0 \implies A$ is a bipartite matrix.
\item[(b)] $H^{1}(X(G_A);\RR) = 0 \implies$ $|A|$ has a rank 1 completion.  
\item[(c)] $H^{1}(X(G_A);\RR) = H^1(X(G_A);\ZZ_2) = 0 \implies$ $A$ has a rank $1$ completion.
\end{enumerate}
\end{proposition}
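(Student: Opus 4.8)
\textbf{Proof proposal for Proposition~\ref{prop:sparse-rank1}.}

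The plan is to handle the three parts in order, using Lemma~\ref{lemma:zero-minors} as the common starting point. By hypothesis all $2\times 2$ and $3\times 3$ principal minors corresponding to cliques in $X(G_A)$ vanish, so Lemma~\ref{lemma:zero-minors} immediately gives that the sign matrix $\sE_A = (\epsilon_{ij})$ is symmetric, that $A_{ij}\neq 0$ for all $(ij)\in X(G_A)$, and that relations~\eqref{eq:2by2} and \eqref{eq:3by3} hold on $X(G_A)$. For part (a), I would check that the hypotheses of the bipartite lemma (Lemma~\ref{lemma:bipartite1}) are met. Condition (i) there, $\epsilon_{ij}\epsilon_{ji}=1$ for $(ij)\in X(G_A)$, is the sign form of~\eqref{eq:2by2} since $A_{ii}A_{jj}>0$. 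Condition (ii), $\epsilon_{ij}\epsilon_{jk}\epsilon_{ki}=-1$ for $(ijk)\in X(G_A)$, is the sign form of~\eqref{eq:3by3}: the left side of~\eqref{eq:3by3} is $A_{ii}A_{jj}A_{kk}<0$, so $A_{ij}A_{jk}A_{ki}<0$, hence the product of the three signs is $-1$. Condition (iii) is exactly the hypothesis $H^1(X(G_A);\ZZ_2)=0$. So Lemma~\ref{lemma:bipartite1} applies and $A$ is bipartite, proving (a).

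For part (b), I want to build a rank $1$ completion of $|A|$. Write $|A|_{ij} = r_i s_j$ for some positive reals; since $|A|_{ii}r_i s_i$ forces... let me instead work with logarithms. Set $a_{ij} = \log|A_{ij}|$ for $(ij)\in X(G_A)$ (including $i=j$); a rank $1$ completion with positive entries exists iff there are reals $t_1,\dots,t_n$ with $a_{ij} = t_i + t_j$ for all $(ij)\in X(G_A)$. Define a $1$-cochain on $X(G_A)$ with $\RR$ coefficients by $c_{ij} = a_{ij} - \tfrac12 a_{ii} - \tfrac12 a_{jj}$ (antisymmetrize appropriately so that this is a genuine simplicial $1$-cochain; concretely, fixing an orientation, $c$ on the edge $(ij)$ is $a_{ij}-\tfrac12(a_{ii}+a_{jj})$, which is symmetric, so I should instead take the honest $1$-cochain to be the piece that needs to be a coboundary). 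The key computation: relation~\eqref{eq:3by3} says $a_{ij}+a_{jk}+a_{ki} = a_{ii}+a_{jj}+a_{kk}$ on every $2$-simplex $(ijk)$, which is precisely the statement that the relevant $1$-cochain is a cocycle, $\delta_1 c = 0$. Since $H^1(X(G_A);\RR)=0$, it is a coboundary: there is a $0$-cochain $(u_i)$ with $c = \delta_0 u$ on $X(G_A)$, i.e. $a_{ij}-\tfrac12(a_{ii}+a_{jj}) = u_i - u_j$ (up to sign conventions). Combining with $a_{ii}$ itself, set $t_i = \tfrac12 a_{ii} + u_i$ (with the sign of $u$ chosen so the asymmetry cancels) to get $a_{ij}=t_i+t_j$ for all $(ij)\in X(G_A)$; then $\overline{|A|}_{ij} := e^{t_i+t_j}$ is a rank $1$ completion of $|A|$, proving (b). I should be careful to treat the signs: since $a_{ij}-\tfrac12(a_{ii}+a_{jj})$ is symmetric in $i,j$ it is not literally a simplicial $1$-cochain; the correct move is that the matrix $M_{ij}=a_{ij}-\tfrac12 a_{ii}-\tfrac12 a_{jj}$ is a symmetric "metric-like" array vanishing on the diagonal and satisfying the additive triangle identity $M_{ij}+M_{jk} = M_{ik}$ on every clique — this is an exactness statement equivalent to $H^1=0$, and the standard argument (fix a basepoint in each connected component, set $u_i = M_{i,*}$) produces the $u_i$.

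For part (c), I combine (a) and (b). By (a), $A$ is bipartite, so there is a partition $\{1,\dots,n\}=\sigma\sqcup\bar\sigma$ giving the sign pattern~\eqref{eq:block-form}; equivalently there are signs $\nu_i\in\{\pm1\}$ with $\epsilon_{ij}=-\nu_i\nu_j$ for all $(ij)\in X(G_A)$. By (b), $|A|$ has a rank $1$ completion $\overline{|A|} = w w^T$ for some positive vector $w = (e^{t_1},\dots,e^{t_n})^T$ (note $\overline{|A|}_{ii}=e^{2t_i}$ is consistent with $|A_{ii}|$). Now set $\overline{A}_{ij} = -\nu_i\nu_j\, e^{t_i+t_j}$, i.e. $\overline A = -(\nu\circ w)(\nu\circ w)^T$ where $\nu\circ w = (\nu_1 e^{t_1},\dots,\nu_n e^{t_n})^T$; this has rank $1$ and negative diagonal $\overline A_{ii}=-e^{2t_i}=A_{ii}$. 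It remains to check $\overline A_{ij}=A_{ij}$ for all distinct $(ij)\in G_A = X(G_A)$ at dimension $1$: the magnitude matches because $|\overline A_{ij}|=e^{t_i+t_j}=\overline{|A|}_{ij}=|A_{ij}|$, and the sign matches because $\sgn(\overline A_{ij}) = -\nu_i\nu_j = \epsilon_{ij} = \sgn(A_{ij})$ by the bipartite partition. Hence $\overline A$ is a rank $1$ completion of $A$, proving (c).

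The main obstacle is part (b): getting the bookkeeping right so that the symmetric, diagonal-vanishing array $M_{ij}=\log|A_{ij}|-\tfrac12\log|A_{ii}|-\tfrac12\log|A_{jj}|$ is correctly interpreted as a simplicial cocycle whose class lives in $H^1(X(G_A);\RR)$. The triangle identity $M_{ij}+M_{jk}=M_{ik}$ on $2$-simplices comes straight from~\eqref{eq:3by3}, so the real content is just the clean translation "additive, symmetric, diagonal-zero, triangle-consistent array on a simplicial complex with $H^1=0$ is a coboundary," together with verifying that the coboundary realization $M_{ij}=u_i-u_j$ actually forces the needed $e^{t_i+t_j}$ factorization after adding back the diagonal terms. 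Everything else — invoking Lemma~\ref{lemma:zero-minors}, Lemma~\ref{lemma:bipartite1} for (a), and splicing signs onto magnitudes for (c) — is routine once (b) is in hand.
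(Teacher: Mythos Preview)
Your overall strategy matches the paper's exactly: invoke Lemma~\ref{lemma:zero-minors}, feed the sign relations into Lemma~\ref{lemma:bipartite1} for (a), pass to logarithms and use the cocycle/coboundary mechanism with $H^1(X;\RR)=0$ for (b), and splice the signs from (a) onto the magnitudes from (b) for (c). Part (a) is fine as written.

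There is, however, a genuine error in (b) that propagates to (c), and it is more than bookkeeping. You assert that the array $c_{ij}=a_{ij}-\tfrac12 a_{ii}-\tfrac12 a_{jj}=\log\bigl(|A_{ij}|/\sqrt{A_{ii}A_{jj}}\bigr)$ is \emph{symmetric} in $i,j$, and you therefore aim for a \emph{symmetric} rank~$1$ completion $\overline{|A|}_{ij}=e^{t_i+t_j}=w_iw_j$. But $A$ is not assumed symmetric, so in general $|A_{ij}|\neq|A_{ji}|$ and no symmetric completion can exist. In fact relation~\eqref{eq:2by2} gives $|A_{ij}||A_{ji}|=|A_{ii}||A_{jj}|$, i.e.\ $a_{ij}+a_{ji}=a_{ii}+a_{jj}$, which shows $c_{ij}+c_{ji}=0$: your cochain is \emph{antisymmetric}, hence an honest element of $\mathcal C^1(X;\RR)$, and relation~\eqref{eq:3by3} makes it a cocycle. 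The vanishing of $H^1$ then yields $c_{ij}=a_j-a_i$ for some $0$-cochain $(a_i)$, and unwinding gives the \emph{non-symmetric} factorization $|A_{ij}|=u_iv_j$ with $u_i=\sqrt{|A_{ii}|}\,e^{-a_i}$ and $v_j=\sqrt{|A_{jj}|}\,e^{a_j}$; this is precisely the paper's construction. Your ``metric-like array'' detour and the basepoint argument are attempts to repair a nonexistent problem.

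The same issue breaks your (c): you build $\overline A=-(\nu\circ w)(\nu\circ w)^T$, which is symmetric and therefore cannot agree with a non-symmetric $A$ on its nonzero off-diagonal entries. The correct fix, as in the paper, is to set $\tilde u_i=-\nu_i u_i$ and $\tilde v_j=\nu_j v_j$ and take $\overline A=\tilde u\,\tilde v^T$. Once you correct the symmetry misreading in (b), everything else in your plan goes through and coincides with the paper's proof.
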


\begin{proof}
Let $X= X(G_A)$.
  First, observe that we satisfy the conditions of Lemma~\ref{lemma:zero-minors}, and so we have relations~\eqref{eq:2by2} and~\eqref{eq:3by3}.  

\noindent (a):  
Let $\sE_{A}= (\epsilon_{ij}),$ with $\epsilon_{ij} \in \{\pm 1,0\}$, be the sign matrix of $A$.  Relations~\eqref{eq:2by2} and~\eqref{eq:3by3} imply
%\vspace{-.2in}
\begin{eqnarray*}
\epsilon_{ij}\epsilon_{ji} &=& 1, \;\;\; \mathrm{for} \;\;\; (ij) \in X, \;\;\; \mathrm{and}\\
\epsilon_{ij}\epsilon_{jk}\epsilon_{ki} &=& -1, \;\;\; \mathrm{for} \;\;\; (ijk) \in X.  
\end{eqnarray*}
Since we also have $H^1(X;\ZZ_2) = 0$, it follows from Lemma~\ref{lemma:bipartite1} that $A$ is a bipartite matrix.

\noindent(b):
For every $A_{ij}$ that is non-zero, introduce the following (real) variables:
\[
  L_{ij}: = \ln \left(\frac{|A_{ij}|}{\sqrt{A_{ii}A_{jj}}}\right).
\]
In these variables, the relations~\eqref{eq:2by2} and~\eqref{eq:3by3} are equivalent to antisymmetry and cocycle conditions on the $L_{ij}$:
\begin{eqnarray} 
\label{antisym} L_{ij}+L_{ji} &=& 0,\;\;\; \mathrm{for} \;\;\; (ij) \in X, \;\;\; \mathrm{and} \\
\label{co-cycle}  L_{ij}+L_{jk}+L_{ki} &=& 0, \;\;\; \mathrm{for} \;\;\; (ijk) \in X.
\end{eqnarray}
Now consider the co-chain complex
\begin{equation}\label{chain}
%0 \longrightarrow {\RR }\stackrel{\delta_{-1}}\longrightarrow 
{\mathcal C}^{0}(X; \RR) \stackrel{\delta_0}\longrightarrow {\mathcal C }^1(X;\RR) \stackrel{\delta_1}{\longrightarrow}  {\mathcal C }^2(X;\RR)  \stackrel{\delta_2}{\longrightarrow} \cdots  \stackrel{\delta_{n-1}}{\longrightarrow} {\mathcal C }^n(X;\RR) \stackrel{\delta_n}{\longrightarrow} 0.
\end{equation}
where $C^{k}(X;\RR)$ is the group of $k$-cochains with coefficients in $\RR$. $C^{0}(X;\RR)$ corresponds to vertex-labelings, $C^{1}(X;\RR)$ is the set of edge-labelings, etc.  As usual, the coboundary operators are $\delta_k (\{f_{i_0,...,i_k}\}) = \{g_{i_0,...,i_{k+1}}\}$, where
$$g_{i_0,...,i_{k+1}} = \sum_{j=0}^{k+1} (-1)^{j} f_{i_0,...,\widehat{i}_j,...,i_{k+1}},$$
and $\delta_{k+1} \circ \delta_k = 0$.  By assumption, $H^{1}(X;\RR) = 0,$ so $\Im \delta_0 = \Ker \delta_1$.

Let $L=(L_{ij}), \mbox{\;for\;} (ij) \in X.$ Observe that \eqref{antisym}
implies that $L \in   {\mathcal C }^1(X;\RR),$  while the cocycle
condition  \eqref{co-cycle} implies  that $L\in \Ker
\delta_1.$ It follows that $L \in \Im \delta_0,$
so there exists an $a\in  {\RR }^n \cong  {\mathcal C}^{0}(X;\RR)$ such that $L_{ij}=a_j-a_i.$ This implies  that, for each $A_{ij} \neq 0$,
\[
  |A_{ij}|=\sqrt{A_{ii}A_{jj}}e^{L_{ij}}=\sqrt{|A_{ii}|}e^{-a_i}\sqrt{|A_{jj}|}e^{a_j}.
\]
Let $u,v \in \RR^n$ with $u_i = \sqrt{|A_{ii}|}e^{-a_i}$ and $v_j = \sqrt{|A_{jj}|}e^{a_j}$.  Since $|A|$ and $uv^T$ agree on on all nonzero entries of $|A|$,  the matrix $\overline{|A|} = uv^T$ is a rank 1 completion of $|A|$.

\noindent(c):
Recall from the proof of part (a) that $\epsilon_{ij}$ is the sign of $A_{ij}$, so that $A_{ij} =
\epsilon_{ij} |A_{ij}|$ for each entry of $A$.   Following the proof of Lemma~\ref{lemma:bipartite1}, there exists a vertex labeling $\{\nu_i\} \in \CC^0(X;\ZZ_2)$, with $\nu_i \in \{\pm 1\}$, such that $\epsilon_{ij} = -\nu_i\nu_j$ whenever $(ij) \in X$.  Choose $u,v \in \RR^n$ as in the proof of part (b), so that $|A_{ij}| = u_iv_j$ whenever $(ij) \in X$.  Now consider $\tilde{u},\tilde{v} \in \RR^n$ where $\tilde{u}_i = -\nu_iu_i$ and $\tilde{v}_j = \nu_jv_j$.  It follows that $A_{ij} = \tilde{u}_i\tilde{v}_j$ whenever $A_{ij} \neq 0$.  The matrix $\bar{A} = \tilde{u}\tilde{v}^T$ is thus a rank 1 completion of $A$.
\end{proof}

\begin{remark}
  Note that Lemma~\ref{lemma:rank1} follows easily from Proposition~\ref{prop:sparse-rank1}, 
  since the clique complex of the complete graph $X(K_n)$ is contractible, so the conditions
  $H^1(X(K_n);\ZZ_2) = 0$ and $H^1(X(K_n);\RR)=0$ are trivially
  satisfied.
In Theorem~\ref{thm:main-result}, for simplicity we use instead the
somewhat stronger condition $H_1(X(G);\ZZ) = 0$.  If $H_1(X(G);\ZZ) = 0$, then $H^1(X(G);\ZZ_2) =
H^1(X(G);\RR) = 0$; this follows from the following well-known
observation.
\end{remark}

\begin{lemma}\label{lem:vanish-Hl1} 
 Let $X$ be a simplicial complex. Assume that $H_{1}(X;\IZ) =0$. Then
 $H^1(X,\mathcal{G}) =0$, for every abelian group $\mathcal{G}$.
\end{lemma}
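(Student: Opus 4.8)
The plan is to deduce this immediately from the Universal Coefficient Theorem for cohomology. Since $X$ is a simplicial complex, its simplicial chain complex is a complex of free abelian groups, so the UCT applies and supplies, for every $n$, a short exact sequence
\[
0 \longrightarrow \operatorname{Ext}^1_{\ZZ}\big(H_{n-1}(X;\ZZ),\,\mathcal{G}\big) \longrightarrow H^n(X;\mathcal{G}) \longrightarrow \operatorname{Hom}\big(H_n(X;\ZZ),\,\mathcal{G}\big) \longrightarrow 0 .
\]
I would specialize to $n=1$. The right-hand term is then $\operatorname{Hom}(H_1(X;\ZZ),\mathcal{G}) = \operatorname{Hom}(0,\mathcal{G}) = 0$ by hypothesis, while the left-hand term is $\operatorname{Ext}^1_{\ZZ}(H_0(X;\ZZ),\mathcal{G})$. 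The one thing that needs to be observed here is that $H_0(X;\ZZ)$ is always a \emph{free} abelian group --- it is $\ZZ^c$, with $c$ the number of connected components of $X$ --- hence projective, so this $\operatorname{Ext}$ group vanishes. Squeezed between two zero groups, $H^1(X;\mathcal{G})$ must itself be zero.

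If one prefers to avoid citing the UCT as a black box, I would argue directly at the cochain level. Write $\partial_1\colon C_1 \to C_0$ and $\partial_2\colon C_2 \to C_1$ for the simplicial boundary maps, and recall that by definition $C^1(X;\mathcal{G}) = \operatorname{Hom}(C_1,\mathcal{G})$. A $1$-cocycle $\phi$ is precisely a homomorphism with $\phi\circ\partial_2 = 0$, i.e.\ one vanishing on $B_1 = \im\partial_2$. The hypothesis $H_1(X;\ZZ) = Z_1/B_1 = 0$ forces $B_1 = Z_1 = \Ker\partial_1$, so $\phi$ vanishes on $\Ker\partial_1$ and therefore factors as $\phi = \psi_0\circ\partial_1$ for a well-defined homomorphism $\psi_0\colon \im\partial_1 \to \mathcal{G}$. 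It then remains to extend $\psi_0$ to all of $C_0$: since $C_0/\im\partial_1 \cong H_0(X;\ZZ)$ is free abelian, the short exact sequence $0 \to \im\partial_1 \to C_0 \to H_0(X;\ZZ) \to 0$ splits, so $\psi_0$ extends to some $\psi \in C^0(X;\mathcal{G})$, and $\phi = \delta_0\psi$ is a coboundary. Hence every $1$-cocycle is a coboundary and $H^1(X;\mathcal{G}) = 0$.

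There is no genuine obstacle in either route: the statement is a routine consequence of homological algebra, and the only ingredient beyond formalities --- whether phrased via the vanishing of $\operatorname{Ext}^1_{\ZZ}(H_0,\mathcal{G})$ or via the splitting above --- is the elementary fact that $H_0$ of a simplicial complex is free abelian.
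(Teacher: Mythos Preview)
Your proposal is correct and your first route is exactly the paper's proof: apply the Universal Coefficient Theorem at $n=1$, observe that $\operatorname{Hom}(H_1(X;\ZZ),\mathcal{G})=0$ by hypothesis and $\operatorname{Ext}^1_{\ZZ}(H_0(X;\ZZ),\mathcal{G})=0$ because $H_0$ is free, hence $H^1(X;\mathcal{G})=0$. Your second, direct cochain-level argument is a nice unpacking of the same mechanism but is not needed for comparison with the paper.
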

\begin{proof}
 This is a consequence of the Universal Coefficients Theorem \cite{Hatcher}, which
 for an abelian group $\mathcal{G}$ yields the short exact
 sequence
 \begin{equation*}
  0\longrightarrow \text{Ext}(H_{q-1}(X,\Z),\mathcal{G})
  \longrightarrow H^{q}(X,\mathcal{G})\longrightarrow
  \text{Hom} (H_q(X,\Z),\mathcal{G})\longrightarrow 0
 \end{equation*}
 for all $q\geq 1$. Note that for $H$ a free abelian group, 
 $\text{Ext} (H,\mathcal{G}) =0$. Since $H_0(X,\Z)$ is always free, the above for $q=1$ 
 yields $H^{1}(X,\mathcal{G}) \cong  \text{Hom} (H_1(X,\Z),\mathcal{G})=0$.
\end{proof}

%\begin{corollary}
%If $H_1(X;\ZZ) = 0$, then $H^1(X;\ZZ_2) = H^1(X;\RR) = 0$.
%\end{corollary}

\subsection{Stable and marginally stable matrices}\label{sec:stable}

Recall that all flexible cliques are marginal cliques, and by Theorem~\ref{thm:stable-clique} the marginal cliques correspond to marginally stable principal submatrices.  Therefore, to make the connection to flexible cliques in Section~\ref{sec:max-flex}, we need to consider what happens to a matrix when its principal submatrices are marginally stable, which is not quite the same as having vanishing determinant.

Recall that a matrix is marginally stable if no eigenvalue has strictly positive real part and at least one eigenvalue is purely imaginary.  
In the case of symmetric matrices, marginal stability implies the
existence of a zero eigenvalue, and hence vanishing determinant.
This is not in general true for non-symmetric matrices.  However, $2\times 2$ and $3\times 3$ marginally stable matrices with negative diagonal entries do have the following characterization:

\begin{lemma}\label{lem:23}
 (i) Let $A$ be a $2\times 2$ real matrix with strictly negative diagonal entries. 
 Assume that $A$ is marginally stable. Then $\det (A) =0$,
 and the sign matrix $\sE_A$ is a symmetric matrix with all entries nonzero. \\
 (ii) Let $A$ be a $3\times 3$ real matrix with strictly
 negative diagonal entries. Assume that $A$ is marginally
 stable. Then either $\det(A) =0$, or $\det(A) \neq 0$ and $A$ has a $2\times 2$ stable principal submatrix.
\end{lemma}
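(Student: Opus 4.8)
The plan is to handle the two parts separately, in each case extracting from marginal stability the constraints it places on the coefficients of the characteristic polynomial, and then translating these back into statements about minors. For part (i), write the characteristic polynomial of the $2\times 2$ matrix $A$ as $\lambda^{2} - (\tr A)\lambda + \det A$ and note that $\tr A < 0$ because both diagonal entries are strictly negative. Marginal stability requires an eigenvalue with zero real part; if it were $\pm i\beta$ with $\beta \neq 0$, then (since $A$ is real) both eigenvalues would be $\pm i\beta$, forcing $\tr A = 0$, a contradiction. Hence the purely imaginary eigenvalue must equal $0$, so $\det A = 0$. From $\det A = A_{11}A_{22} - A_{12}A_{21} = 0$ and $A_{11}A_{22} > 0$ we get $A_{12}A_{21} = A_{11}A_{22} > 0$, so $A_{12}$ and $A_{21}$ are both nonzero with the same sign; together with the nonzero diagonal this shows $\sE_A$ is symmetric with all entries nonzero.

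For part (ii), I would use the standard fact that the characteristic polynomial of the $3\times 3$ matrix $A$ equals $\lambda^{3} - (\tr A)\lambda^{2} + M_{2}\lambda - \det A$, where $M_{2}$ is the sum of the three $2\times 2$ principal minors of $A$; equivalently, $M_{2}$ is the second elementary symmetric function of the eigenvalues. As before $\tr A < 0$. If $A$ has $0$ as an eigenvalue we are immediately in the first alternative, $\det A = 0$. Otherwise marginal stability produces an eigenvalue $i\beta$ with $\beta \neq 0$; since $A$ is real its three eigenvalues must then be exactly $i\beta$, $-i\beta$, and some $\mu \in \RR$, and the trace identity gives $\mu = \tr A < 0$, so $\det A = \mu\beta^{2} \neq 0$. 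Computing $M_{2} = (i\beta)(-i\beta) + \mu(i\beta) + \mu(-i\beta) = \beta^{2} > 0$, I conclude that at least one of the three $2\times 2$ principal minors of $A$ is strictly positive.

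To finish part (ii), it remains to observe that a $2\times 2$ real matrix $B$ with strictly negative diagonal and $\det B > 0$ is stable: then $\tr B < 0$ and $\det B > 0$, so both eigenvalues have strictly negative real part (if real they have a common sign and negative sum; if complex conjugate their real part is $\tfrac{1}{2}\tr B < 0$). Applying this to the principal submatrix realizing the positive $2\times 2$ minor gives the required stable principal submatrix. All of the calculation here is elementary; the one place to be careful is the eigenvalue bookkeeping in part (ii) --- confirming that the complex conjugate pair must be exactly $\pm i\beta$ and the remaining eigenvalue real and equal to $\tr A$ --- together with the identity expressing $M_{2}$ as the sum of $2\times 2$ principal minors, since this is precisely what turns marginal stability into the positivity $M_{2} > 0$ that yields the stable submatrix.
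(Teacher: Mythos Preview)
Your proof is correct and follows essentially the same route as the paper's: in both parts you use $\tr A<0$ to rule out a nonzero purely imaginary eigenvalue in the $2\times 2$ case and to force the third eigenvalue to be real and negative in the $3\times 3$ case, then invoke the identity $M_2=\sum_{i<j}\lambda_i\lambda_j=\beta^2>0$ to produce a $2\times 2$ principal submatrix with positive determinant and negative trace. Your write-up is in fact slightly more detailed than the paper's (you spell out why such a $2\times 2$ submatrix is stable and why $\sE_A$ is symmetric with nonzero entries), but the argument is the same.
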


\begin{proof}
(i) The matrix $A$ must have a purely
  imaginary eigenvalue. Since $\tr(A)<0$, this eigenvalue must be $0$, and thus
  $\det(A) =0$. As a consequence, the off-diagonal entries of the
  matrix $A$ must have the same sign and are both nonzero. 
 % Thus the sign matrix $\sE_{A}$ is symmetric and has all entries nonzero.

 \noindent
 (ii) Let $\lambda_1$ be a
 purely imaginary eigenvalue of $A$. We have two possibilities: either 
 $\lambda_1 =0$ and thus $\det(A)=0$, or $\lambda_1 \neq 0$.  In the second case, the conjugate 
  $\bar{\lambda}_1 = -\lambda_1$ is also an eigenvalue,
 and since $\tr(A) <0$, the third eigenvalue $\lambda_3$ must be negative and so
 $\det(A) \neq 0$. 
 Consider now the characteristic polynomial of $A$,
 $P_{A}(X)=-X^{3} + \Tr(A) X^{2} - M_{2}(A) X + \det (A),$
  where $M_{2}(A)$ denotes the sum of the principal $2 \times 2$ minors of $A$.
 Using the usual expression for the coefficients of $P_A(X)$ in terms of eigenvalues of $A$, we find that
$M_2(A) = \lambda_1 (-\lambda_1) + \lambda_1\lambda_3 +(- \lambda_1)\lambda_3 = |\lambda_1|^2 >0.$
 There thus exists a $2 \times 2$ principal submatrix of $A$ with
 positive determinant and negative trace.  This submatrix is stable.
\end{proof}

We also have relationships between the stability of a matrix and its principal submatrices.  In the case of symmetric matrices, it follows from Cauchy's interlacing theorem that all principal submatrices of a stable matrix are stable.  

\begin{theorem}[Cauchy's interlacing theorem] Let $A$ be a symmetric $n \times n$ matrix, and let $B$ be an $m \times m$ principal submatrix of $A$, where $m \leq n$.  If the eigenvalues of $A$ are $\alpha_1 \leq ...\alpha_j... \leq \alpha_n$, and those of $B$ are $\beta_1 \leq ... \beta_j ... \leq \beta_m$, then for all $j$ we have $\alpha_j \leq \beta_j \leq \alpha_{n-m+j}.$
\end{theorem}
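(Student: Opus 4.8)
The plan is to deduce the interlacing inequalities from the Courant--Fischer min--max characterization of the eigenvalues of a symmetric matrix, which is itself a standard consequence of the spectral theorem (so I would either cite it or insert a two-line proof). Recall that for a symmetric $n\times n$ matrix $A$ with eigenvalues $\alpha_1\le\cdots\le\alpha_n$ and any $1\le j\le n$,
$$\alpha_j=\min_{\dim S=j}\ \max_{0\ne x\in S}\frac{x^TAx}{x^Tx}=\max_{\dim S=n-j+1}\ \min_{0\ne x\in S}\frac{x^TAx}{x^Tx},$$
where $S$ ranges over linear subspaces of $\RR^n$ of the indicated dimension. Writing $B=A_\sigma$ for a subset $\sigma\subset\{1,\dots,n\}$ with $|\sigma|=m$, I would identify $\RR^m$ with the coordinate subspace $V_\sigma=\{x\in\RR^n : x_i=0 \text{ for } i\notin\sigma\}$, under which $x^TBx$ equals $y^TAy$ for the corresponding $y\in V_\sigma$ and $x^Tx=y^Ty$. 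Hence the same two formulas compute $\beta_1\le\cdots\le\beta_m$, except that now $S$ ranges only over subspaces of $V_\sigma$.

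For the lower bound $\alpha_j\le\beta_j$ I would use the first (min over $j$-dimensional $S$) formula: the $j$-dimensional subspaces of $V_\sigma$ form a sub-family of all $j$-dimensional subspaces of $\RR^n$, and the Rayleigh quotient being minimized is the same function, so minimizing over the smaller family can only increase the value, giving $\beta_j\ge\alpha_j$. For the upper bound $\beta_j\le\alpha_{n-m+j}$ I would use the second (max--min) formula: since $n-(n-m+j)+1=m-j+1$, the eigenvalue $\alpha_{n-m+j}$ is obtained by maximizing $\min_{0\ne x\in S}(x^TAx)/(x^Tx)$ over $(m-j+1)$-dimensional subspaces of $\RR^n$, whereas $\beta_j$ maximizes the same quantity over the sub-family of $(m-j+1)$-dimensional subspaces of $V_\sigma$; maximizing over a smaller family can only decrease the value, so $\beta_j\le\alpha_{n-m+j}$.

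An alternative route, avoiding Courant--Fischer, is to induct on $n-m$: it suffices to treat the single-deletion case $m=n-1$ and iterate, since each deletion shifts the index bounds by exactly one and the shifts telescope to the stated bounds. For the single-deletion step one can either compare eigenspaces directly by dimension counting, or analyze the sign changes of the secular function $f(\lambda)=\sum_i |u_i|^2/(\alpha_i-\lambda)$ whose roots are the $\beta_j$. I do not expect a genuine obstacle in either approach; the only point requiring care is the index bookkeeping in the max--min form — one must check that the dimension $n-(n-m+j)+1$ equals $m-j+1$ — together with the edge cases $m=n$ and $j\in\{1,m\}$, so that all subspace dimensions stay in the admissible range.
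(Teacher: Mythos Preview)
Your argument via the Courant--Fischer min--max principle is correct and is the standard proof; the index computation $n-(n-m+j)+1=m-j+1$ and the monotonicity of min/max under restriction to a sub-family of subspaces are exactly what is needed, and both inequalities follow as you describe. The paper, however, does not actually prove this theorem: it is stated without proof as a classical result and used only through its corollaries (that principal submatrices of a stable symmetric matrix are stable, and the sign criterion for stability). So there is nothing to compare against --- you have supplied a valid proof where the paper simply cites the result.
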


\begin{corollary}\label{cor:simplicialprop}
Any principal submatrix of a stable symmetric matrix is stable.  Any symmetric matrix containing an unstable principal submatrix is unstable.
\end{corollary}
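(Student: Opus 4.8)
The plan is to derive both assertions directly from Cauchy's interlacing theorem, using only the elementary observation that a principal submatrix of a real symmetric matrix is again real symmetric, so that every eigenvalue in sight is real and the notions ``stable'' and ``unstable'' collapse to sign conditions on the largest eigenvalue.

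First I would fix notation: let $A$ be a symmetric $n \times n$ matrix with eigenvalues $\alpha_1 \le \cdots \le \alpha_n$, and let $B = A_\sigma$ be an $m \times m$ principal submatrix (so $|\sigma| = m$), which is again symmetric, with eigenvalues $\beta_1 \le \cdots \le \beta_m$. Cauchy interlacing gives in particular the single inequality $\beta_m \le \alpha_{n-m+m} = \alpha_n$; that is, the top eigenvalue of any principal submatrix is bounded above by the top eigenvalue of $A$. This one inequality, plus reality of the spectra, is the whole content.

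For the first statement, suppose $A$ is stable. Since $A$ is symmetric its eigenvalues are real, so stability means exactly $\alpha_n < 0$. Then every eigenvalue of $B$ satisfies $\beta_j \le \beta_m \le \alpha_n < 0$, so $B$ is stable. For the second statement, suppose some principal submatrix $B$ of $A$ is unstable; because $B$ is symmetric with real spectrum, ``unstable'' forces $\beta_m > 0$, and then $\alpha_n \ge \beta_m > 0$, so $A$ has a positive eigenvalue and is unstable.

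I do not anticipate any genuine obstacle: the argument is entirely contained in the interlacing bound $\beta_m \le \alpha_n$ together with the fact that real symmetric matrices have real eigenvalues. The only points requiring a word of care are (a) noting that a principal submatrix of a symmetric matrix is symmetric, so that the spectral reductions apply to $B$ as well as to $A$, and (b) recording that for real symmetric matrices ``stable'' is equivalent to $\alpha_n < 0$ and ``unstable'' to $\alpha_n > 0$, which is what lets the single interlacing inequality do all the work.
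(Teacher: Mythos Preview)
Your proof is correct and is exactly the argument the paper has in mind: the corollary is stated immediately after Cauchy's interlacing theorem with no separate proof, and your derivation via the inequality $\beta_m \le \alpha_n$ together with the reality of the spectrum is precisely the intended one-line justification.
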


\noindent Even in the case of non-symmetric matrices, there are still some constraints of this type.  For example,

\begin{lemma}\label{lemma:2x2}
Let $A$ be an $n \times n$ matrix with strictly negative diagonal
entries and $n \geq 2$.  If $A$ is stable, then there exists a $2 \times 2$ principal submatrix of $A$ that is also stable.
\end{lemma}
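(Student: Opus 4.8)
The plan is to argue by contradiction via the trace of $A^2$. Suppose $A$ is an $n\times n$ stable matrix with strictly negative diagonal entries, but every $2\times 2$ principal submatrix of $A$ is \emph{not} stable. Since $A$ is stable, the sum of its eigenvalues is $\Tr(A) < 0$ and the sum of squares of its eigenvalues is $\sum_k \lambda_k^2$. The key identity is that $\Tr(A^2) = \sum_k \lambda_k^2 = \sum_{i,j} A_{ij}A_{ji}$, and also $\left(\sum_k \lambda_k\right)^2 = \Tr(A)^2 = \left(\sum_i A_{ii}\right)^2$. I would try to extract a contradiction by showing that the assumption ``no $2\times 2$ principal submatrix is stable'' forces $\sum_k \lambda_k^2$ to be too large (or forces $\Re\sum \lambda_k^2$ to be nonnegative in a way incompatible with all $\Re\lambda_k < 0$). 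Indeed, if all $\Re \lambda_k < 0$, then writing $\lambda_k = a_k + ib_k$ with $a_k < 0$, we have $\Re\sum_k\lambda_k^2 = \sum_k(a_k^2 - b_k^2)$, which need not have a definite sign in general — so the real content must come from combining this with $\Tr(A) < 0$ and the structure of the $2\times 2$ minors.

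Here is the mechanism I expect to work. A $2\times 2$ principal submatrix $A_{\{i,j\}} = \begin{pmatrix} A_{ii} & A_{ij}\\ A_{ji} & A_{jj}\end{pmatrix}$ has negative trace automatically (diagonal entries are negative), so by the $2\times 2$ Routh--Hurwitz criterion it is stable if and only if $\det A_{\{i,j\}} = A_{ii}A_{jj} - A_{ij}A_{ji} > 0$. Thus the hypothesis ``no $2\times 2$ principal submatrix is stable'' means $A_{ij}A_{ji} \geq A_{ii}A_{jj} > 0$ for all $i\neq j$. Now compute
\begin{equation*}
\Tr(A^2) = \sum_i A_{ii}^2 + \sum_{i\neq j} A_{ij}A_{ji} \geq \sum_i A_{ii}^2 + \sum_{i\neq j} A_{ii}A_{jj} = \left(\sum_i A_{ii}\right)^{\!2} = \Tr(A)^2.
\end{equation*}
On the other hand, $\Tr(A^2) = \sum_k \lambda_k^2$ and $\Tr(A)^2 = \left(\sum_k \lambda_k\right)^2$, so we obtain $\sum_k \lambda_k^2 \geq \left(\sum_k \lambda_k\right)^2$, i.e. $\sum_{k\neq \ell} \lambda_k\lambda_\ell \leq 0$. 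The remaining task is to show this is impossible when all $\lambda_k$ have strictly negative real part. Since the $\lambda_k$ come in conjugate pairs together with real negative eigenvalues, I would split the sum and bound: taking real parts, $\Re\sum_{k\neq\ell}\lambda_k\lambda_\ell \leq 0$ must be reconciled with the fact that $\left(\sum_k \Re\lambda_k\right)^2 - \sum_k (\Re\lambda_k)^2 + \sum_k (\Im\lambda_k)^2 = \Re\sum_{k\neq\ell}\lambda_k\lambda_\ell$; the first term is a square of a negative number hence positive, and one needs the imaginary parts not to overwhelm it. This is where the argument is delicate, and I suspect the clean route is instead to note that stability of $A$ implies $\Tr(A) < 0$ and that one can diagonalize/triangularize to control the signs, or to invoke a known fact that a stable matrix has $\Re\Tr(A^2)$ bounded appropriately.

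The main obstacle, then, is the final step: converting the inequality $\sum_{k\neq\ell}\lambda_k\lambda_\ell \leq 0$ into a contradiction with $\Re\lambda_k < 0$ for all $k$. I expect the cleanest resolution is a dimension-reduction / induction: if $A$ itself has a real part structure making $\sum_k\lambda_k$ real and negative, then among the eigenvalues there is at least one real negative eigenvalue or a conjugate pair, and one can peel off the pair with smallest $|\Im|$; alternatively, one observes that $\sum_{k\neq \ell}\lambda_k\lambda_\ell = \tfrac12\big(\Tr(A)^2 - \Tr(A^2)\big)$ and argues directly that for a stable matrix this quantity must be \emph{positive} whenever $n\geq 2$ — perhaps by a perturbation argument reducing to the diagonalizable case and then to the $2\times 2$ and $3\times 3$ cases handled in Lemma~\ref{lem:23}. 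Once that positivity is established, it contradicts $\Tr(A^2)\geq \Tr(A)^2$ above, completing the proof.
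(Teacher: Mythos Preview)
Your setup is correct and in fact lands on exactly the same core fact the paper uses. Rewriting your inequality $\Tr(A^2)\geq \Tr(A)^2$ as $\sum_{k\neq\ell}\lambda_k\lambda_\ell\leq 0$ is the same as saying $M_2(A)=\sum_{i<j}\lambda_i\lambda_j\leq 0$, where $M_2(A)$ is the sum of the $2\times 2$ principal minors. The paper's proof is simply the direct (non-contrapositive) version of yours: it asserts $M_2(A)>0$ for a stable matrix, so some $2\times 2$ principal minor is positive, and since that submatrix has negative trace it is stable. Your contradiction argument and the paper's direct argument are logically identical; the only difference is that the paper treats $M_2(A)>0$ as known, while you circle around it without proving it.

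So the one genuine gap in your proposal is precisely that step: you need to show that a real matrix with all eigenvalues in the open left half-plane has $\sum_{i<j}\lambda_i\lambda_j>0$. Your speculative routes (splitting into real/imaginary parts, induction, perturbation to Lemma~\ref{lem:23}) are all unnecessarily complicated. The clean one-line argument is to factor the characteristic polynomial $\det(xI-A)=\prod_k(x-\lambda_k)$ over $\RR$: it is a product of linear factors $(x-r_j)$ with $r_j<0$ and real quadratic factors $(x^2+p_kx+q_k)$ with $p_k,q_k>0$ (coming from conjugate pairs $a\pm ib$, $a<0$). Each factor has strictly positive coefficients, so the product does too; in particular the coefficient of $x^{n-2}$, which is $\sum_{i<j}\lambda_i\lambda_j=M_2(A)$, is strictly positive. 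That immediately contradicts your derived inequality $M_2(A)\leq 0$ and finishes the proof.
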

\begin{proof}
 We use the formula for the characteristic polynomial in terms of sums of principal minors:
 $$
   P_A(x) = (-1)^n x^n + (-1)^{n-1}M_1(A) x^{n-1}+
   (-1)^{n-2}M_2(A) x^{n-2}+ \ldots + M_n(A),
 $$
 where $M_k(A)$ is the sum of the $k \times k$ principal minors of
 $A$.  (Note that $M_1(A) = \Tr(A)$ and $M_n(A) = \det(A)$.)
 The characteristic polynomial also has the
 well-known formula with coefficients that are symmetric polynomials in the eigenvalues; assuming $A$ is stable,
 this yields
 $ M_2(A) = \sum_{i<j} \lambda_i\lambda_j > 0.$
 This implies that at least one $2 \times 2$ principal minor is
 positive.  Since the corresponding $2 \times 2$ principal submatrix
 has negative trace, it must be stable.
\end{proof}

\noindent In order to prove our main results in Section~\ref{sec:max-flex}, we will also use the following well-known consequences of Cauchy's interlacing theorem. Here $A_k$ refers to the principal submatrix obtained by taking the upper left $k \times k$ entries of $A$.

\begin{lemma}[Stable symmetric matrices]\label{lemma:signcondition}  Let $A$ be a real symmetric $n \times n$ matrix.  Then $A$ is stable iff $(-1)^k\det(A_k)>0$ for all $1 \leq k \leq n$.
\end{lemma}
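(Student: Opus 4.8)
The plan is to recognize this statement as the Jacobi/Sylvester definiteness criterion and to prove it by induction on $n$ using Cauchy's interlacing theorem, the tool set up just above. Since $A$ is real symmetric, its eigenvalues are real, so $A$ is stable exactly when all of its eigenvalues are strictly negative, i.e.\ when $-A$ is positive definite; and $(-1)^k\det(A_k) = \det((-A)_k)$, so the statement is precisely Sylvester's criterion applied to $-A$. I will nevertheless give a direct argument so as to keep the section self-contained.

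\emph{($\Rightarrow$)} If $A$ is stable, then by Corollary~\ref{cor:simplicialprop} each principal submatrix $A_k$ is stable, hence (being symmetric) has $k$ strictly negative real eigenvalues, so $\det(A_k)$ is a product of $k$ negative numbers and has sign $(-1)^k$. Thus $(-1)^k\det(A_k)>0$ for every $k$.

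\emph{($\Leftarrow$)} Assume $(-1)^k\det(A_k)>0$ for $1\le k\le n$, and induct on $n$. For $n=1$, the hypothesis reads $-\det(A_1)>0$, so $A_1<0$ and $A$ is stable. For $n\ge 2$: the leading principal minors of $A_{n-1}$ are $\det(A_1),\dots,\det(A_{n-1})$, so $A_{n-1}$ satisfies the inductive hypothesis and is stable; let $\beta_1\le\cdots\le\beta_{n-1}<0$ be its eigenvalues, and let $\alpha_1\le\cdots\le\alpha_n$ be those of $A$. Cauchy's interlacing theorem with $B=A_{n-1}$ gives $\alpha_j\le\beta_j$ for $j=1,\dots,n-1$, so $\alpha_1,\dots,\alpha_{n-1}$ are all strictly negative. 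Finally, $(-1)^n\det(A)>0$ forces $\det(A)\ne 0$, hence $\alpha_n\ne 0$ and $\alpha_n = \det(A)/(\alpha_1\cdots\alpha_{n-1})$; the numerator has sign $(-1)^n$ and the denominator, a product of $n-1$ negative numbers, has sign $(-1)^{n-1}$, so $\alpha_n<0$. Thus every eigenvalue of $A$ is negative and $A$ is stable.

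The only delicate point is the top eigenvalue $\alpha_n$: interlacing against $A_{n-1}$ pins down only the $n-1$ smallest eigenvalues of $A$, and one must bring in the sign of $\det(A_n)$ — the single hypothesis not yet consumed by the induction — in order to conclude that $\alpha_n$ is negative as well. Everything else is bookkeeping.
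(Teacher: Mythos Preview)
Your proof is correct. The paper does not actually supply a proof of this lemma; it simply states it as a well-known consequence of Cauchy's interlacing theorem, which is exactly the tool you use. Your argument fills in precisely the details the paper omits, and does so cleanly: the forward direction via Corollary~\ref{cor:simplicialprop}, the backward direction by induction with interlacing handling $\alpha_1,\dots,\alpha_{n-1}$ and the sign of $\det(A_n)$ forcing $\alpha_n<0$.
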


\begin{corollary}\label{cor:signcondition}
Let $A$ be a real symmetric $n \times n$ matrix.  Then $A$ is stable iff $(-1)^{|\sigma|}\det(A_\sigma)>0$ for every principal submatrix $A_\sigma$.
\end{corollary}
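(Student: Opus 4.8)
The plan is to deduce this immediately from Lemma~\ref{lemma:signcondition} together with Corollary~\ref{cor:simplicialprop}; no new idea is needed beyond recognizing that for symmetric matrices ``stable'' is closed under passing to principal submatrices.

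For the ($\Leftarrow$) direction, suppose $(-1)^{|\sigma|}\det(A_\sigma)>0$ for every principal submatrix $A_\sigma$. Specializing to the leading principal submatrices $A_k = A_{\{1,\dots,k\}}$ for $k = 1,\dots,n$ gives exactly the hypothesis $(-1)^k\det(A_k)>0$ of Lemma~\ref{lemma:signcondition}, so $A$ is stable.

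For the ($\Rightarrow$) direction, assume $A$ is stable and fix an arbitrary $\sigma \subset \{1,\dots,n\}$ with $|\sigma| = k$. First I would note that $A_\sigma$ is again a real symmetric $k \times k$ matrix, and by Corollary~\ref{cor:simplicialprop} it is stable, since it is a principal submatrix of the stable symmetric matrix $A$. Now apply Lemma~\ref{lemma:signcondition} directly to the matrix $A_\sigma$ at level $k$: the leading $k \times k$ block of $A_\sigma$ is $A_\sigma$ itself, so the lemma yields $(-1)^k\det(A_\sigma) > 0$, i.e. $(-1)^{|\sigma|}\det(A_\sigma) > 0$. (Equivalently, one can conjugate $A$ by a permutation matrix $P$ taking $\sigma$ to $\{1,\dots,k\}$: $P^{T}AP$ is symmetric, has the same eigenvalues as $A$, hence is stable, and its leading $k \times k$ minor equals $\det(A_\sigma)$.) Since $\sigma$ was arbitrary, the sign condition holds for all principal submatrices.

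I do not expect any real obstacle here: the substantive content is entirely contained in Lemma~\ref{lemma:signcondition} (Sylvester's criterion for negative definiteness, itself a consequence of Cauchy interlacing) and in Corollary~\ref{cor:simplicialprop}. The only point requiring a moment's care is to invoke Corollary~\ref{cor:simplicialprop} so that the hypothesis of Lemma~\ref{lemma:signcondition} transfers from $A$ to each $A_\sigma$; without that step one would only recover the condition on leading principal minors rather than on all of them.
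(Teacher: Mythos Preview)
Your proof is correct. The paper does not actually supply a proof of this corollary; it is stated immediately after Lemma~\ref{lemma:signcondition} as a well-known consequence of Cauchy's interlacing theorem, and your derivation via Lemma~\ref{lemma:signcondition} and Corollary~\ref{cor:simplicialprop} is exactly the intended route.
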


%%%%%%%%%%%%%%%%%%% START SECTION 4 %%%%%%%%%%%%%%%%%%%%%%%%%%%%%%%%%%

\vspace{.1in}
\section{Maximally flexible networks}\label{sec:max-flex}

In this section we use the matrix results from Section~\ref{sec:matrices} in order to prove our main results, Theorems~\ref{thm:rank1}, \ref{thm:main-result} and \ref{thm:unconstrained}, characterizing maximally flexible networks.

\subsection{Flexible vs. marginal cliques} \label{sec:flex-marg}

Recall that all flexible cliques are marginal cliques, because they can be made both stable and unstable via arbitrarily small perturbations of the network's connection strengths.  The converse is not true.
The following lemma gives simple, but incomplete, conditions for determining whether or not a marginal clique is flexible
in threshold-linear networks.

\begin{lemma}\label{lemma:marg-flex}
Let $\sigma$ be a marginal clique of a threshold-linear network $(J,D)_G$.  
\begin{enumerate}
\item If there exists $\tau\in X(G)$ such that either (i)
$\tau \subsetneq \sigma$ and $\tau$ unstable, or (ii) $\tau \supsetneq \sigma$ and $\tau$ stable, then $\sigma$ is not a flexible clique.
\item If, on the other hand, (i) for all $\tau \subsetneq \sigma$, $\tau$ is a stable clique, and
(ii) for all $\tau \in X(G)$ such that $\tau \supsetneq \sigma$, $\tau$ is an unstable clique,
then $\sigma$ is a flexible clique.
\end{enumerate}
\end{lemma}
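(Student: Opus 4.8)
The plan is to push everything through Theorem~\ref{thm:stable-clique}: for a consistent $\varepsilon$-perturbation $A$ and a clique $\tau\in X(G)$, $\tau$ is a stable / marginal / unstable clique of $(J+A,D)_G$ exactly when the principal submatrix $(-D+J+A)_\tau$ is a stable / marginally stable / unstable matrix. Two observations will do most of the work. First, stability and instability of a fixed matrix are open conditions, so once $\varepsilon$ is small the (un)stable status of any one given clique cannot change. Second, because $\sigma$ is a clique every off-diagonal entry of the $\sigma$-block of $-D+J$ lies over an edge of $G$, so I am free to perturb all off-diagonal entries of $(-D+J)_\sigma$ by an arbitrarily small amount (the diagonal is never touched and remains strictly negative).

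For part 1 I would argue by obstruction. If $\tau\subsetneq\sigma$ is unstable then $(-D+J)_\tau$ is an unstable matrix, and by openness there is $\varepsilon_1>0$ such that $(-D+J+A)_\tau$ stays unstable for every consistent $\varepsilon_1$-perturbation $A$; hence $\sigma$ always properly contains the unstable clique $\tau$ and can never be a minimally unstable clique, so no admissible $A_u$ exists for $\varepsilon<\varepsilon_1$ and $\sigma$ is not flexible. The case $\tau\supsetneq\sigma$ stable is symmetric: openness of the stable matrices keeps $\tau$ a stable clique, so $\sigma$ is never maximally stable.

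For part 2, assume (i) every $\tau\subsetneq\sigma$ is a stable clique and (ii) every $\tau\in X(G)$ with $\tau\supsetneq\sigma$ is an unstable clique. Since there are finitely many such $\tau$, openness lets me fix $\varepsilon_0>0$ so small that for every consistent $\varepsilon_0$-perturbation all subcliques of $\sigma$ stay stable and all supercliques in $X(G)$ stay unstable. For $\varepsilon\le\varepsilon_0$ it then suffices to produce consistent $\varepsilon$-perturbations $A_s,A_u$ supported on the $\sigma$-block with $(-D+J+A_s)_\sigma$ a stable matrix and $(-D+J+A_u)_\sigma$ an unstable matrix: then $\sigma$ is stable and, by (ii), not inside a larger stable clique, hence maximally stable for $J+A_s$; and $\sigma$ is unstable and, by (i), contains no unstable clique, hence minimally unstable for $J+A_u$. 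So all of part 2 reduces to the matrix fact: a marginally stable matrix $M$ with strictly negative diagonal can be moved, by an arbitrarily small off-diagonal perturbation, to a stable matrix and (separately) to an unstable matrix. For the unstable direction I would use first-order eigenvalue perturbation at a purely imaginary eigenvalue $\lambda_0=i\omega$ of $M$ ($\omega\in\RR$, possibly $0$): if $\lambda_0$ is simple with left and right eigenvectors $w,v$ normalized by $w^*v=1$, then $\lambda_0$ moves with derivative $w^*Ev$ along $M+tE$, and the functional $E\mapsto w^*Ev$ on zero-diagonal $E$ is not identically zero — if it were, then picking $j_0$ with $v_{j_0}\neq0$ forces $w$ supported on $\{j_0\}$, whence row $j_0$ of $M$ equals $\lambda_0 e_{j_0}^T$ and in particular $M_{j_0j_0}=\lambda_0$, impossible for a real strictly negative entry — so I can choose $E$ pushing $\lambda_0$ into the open right half-plane. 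The stable direction instead asks that all purely imaginary eigenvalues move strictly left simultaneously, a finite system of strict inequalities $\Re(w_k^*Ev_k)<0$ (one per conjugate pair, since $M,E$ are real) on the off-diagonal entries of $E$.

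\textbf{The hard part} will be exactly this last matrix fact in full generality, because the natural move — replacing $M$ by $M-\delta I$ — is forbidden. The unstable direction is robust even when the purely imaginary eigenvalue is repeated (a Puiseux expansion of the perturbed roots along a generic off-diagonal direction always has a branch entering the right half-plane). The stable direction is the genuinely delicate one: when $M$ has a purely imaginary eigenvalue of multiplicity $\ge 2$ one must move the whole eigenvalue-cluster inward, which works cleanly when that eigenvalue is semisimple (arrange the off-diagonal perturbation so that its compression to the eigenspace is itself a stable matrix — here having all off-diagonal entries of the $\sigma$-block free is what makes the genericity argument go through), but cannot be achieved by a small perturbation if there is a nontrivial Jordan block at an imaginary eigenvalue. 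I would therefore want to show, using hypotheses (i)–(ii) together with the strictly negative diagonal, that such Jordan structure cannot arise for $M=(-D+J)_\sigma$, or else supply a genericity argument for the compression map; this is the step I expect to cost the most work, with everything else being bookkeeping with Theorem~\ref{thm:stable-clique} and openness.
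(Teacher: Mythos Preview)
Your Part~1 argument is correct and is essentially what the paper has in mind: stability and instability of a fixed principal submatrix are open conditions, so a sufficiently small perturbation cannot remove an unstable proper subclique or a stable proper superclique, and this blocks minimal instability (resp.\ maximal stability) of $\sigma$.

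For Part~2, however, you have made the problem much harder than the paper does, because of one self-imposed restriction: you assume ``the diagonal is never touched.'' In the paper's setup a $G$-consistent $\varepsilon$-perturbation $A$ is only required to satisfy $A_{ij}=0$ for off-diagonal pairs $(ij)\notin G$; the diagonal entries $A_{ii}$ are unconstrained (beyond $|A_{ii}|\le\varepsilon$). The paper's entire argument for Part~2 is then the one-liner you ruled out: take $A_s$ (resp.\ $A_u$) to be $-\delta I$ (resp.\ $+\delta I$) on the $\sigma$-block and zero elsewhere. This shifts every eigenvalue of $(-D+J)_\sigma$ by $\mp\delta$, turning a marginally stable matrix into a stable (resp.\ unstable) one for any $\delta>0$. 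Choosing $\delta$ small enough that the finitely many stable subcliques and unstable supercliques in $X(G)$ retain their status (your openness argument) finishes the proof.

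Consequently, the ``hard part'' you flag --- moving all purely imaginary eigenvalues strictly leftward via purely off-diagonal perturbations, and the attendant worry about nontrivial Jordan blocks at imaginary eigenvalues --- is an artifact of forbidding diagonal perturbations and does not arise in the paper's framework. Your first-order eigenvalue analysis is correct as far as it goes, and the unstable direction would indeed go through with off-diagonal perturbations alone, but for the stable direction you correctly identify a genuine obstruction (a nontrivial Jordan block at a purely imaginary eigenvalue) that you have no mechanism to exclude from hypotheses (i)--(ii). The fix is not to push that analysis further; it is to drop the restriction and use the identity shift, as the paper does.
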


\noindent The proof follows from observing that any marginal clique can be perturbed to become stable or unstable by adding a multiple of the identity matrix to the corresponding principal submatrix, and one can always find a small enough perturbation so that the stability of all stable and unstable principal submatrices in the original matrix is preserved.
It is thus straightforward to check the flexibility of marginal cliques if certain patterns of stable/unstable cliques are also present.  This is illustrated in the following example.

\begin{example} Consider the following matrices $-D+J$ for (unconstrained) threshold-linear networks $(D,J) \in \sN(3)$:  
\begin{small}
$$M_1 = \left(\begin{array}{ccc}-1 & 0 & -2 \\ -2 & -1 & 0 \\ 0 & -2 & -1 \end{array}\right),\;
M_2 = \left(\begin{array}{ccc}-1 & -1 & 1 \\ -1 & -1 & 0 \\ 0 & 1 & -1 \end{array}\right), \;\mbox{and}\;\;
M_3 = \left(\begin{array}{ccc}-1 & 2 & 1 \\ 1 & -1 & 0 \\ 0 & -1 & -1 \end{array}\right).
$$
\end{small}
\begin{itemize}
\item[$M_1$:] $\{1,2,3\}$ is a flexible clique since it is marginal and all contained cliques are stable.
\item[$M_2$:] $\{1,2\}$ is a marginal clique but it is not flexible, since $\{1,2,3\}$ is stable.
\item[$M_3$:] $\{1,2,3\}$ is a marginal clique but it is not flexible, since $\{1,2\}$ is unstable.
\end{itemize}
\end{example}

Note that Lemma~\ref{lemma:marg-flex} says nothing about the situation
where the cliques contained by or containing a given marginal clique
are themselves also marginal.  It is much more difficult to
check for flexible cliques in a network with many marginal cliques.
We investigate precisely this case, as we look for
properties of networks with the maximal number of flexible cliques.

\subsection{Proof of Theorem~\ref{thm:rank1}}\label{sec:thm2}

We begin with an example of a network in which all the cliques with at
least two neurons are flexible. Such a network is maximally
flexible, and provides a reference point in proving
that all rank $1$ networks are maximally flexible in $\sN(n)$.
The proof relies on the following determinant formulas:

\begin{lemma}
Let $W_n(\epsilon,\alpha)$, for $n \geq 2$, be the symmetric $n \times n$ matrix with entries
\begin{eqnarray}\label{eq:Wn}
W_n(\epsilon,\alpha)_{ij} = \left\{\begin{array}{ccc} -1, &\mathrm{if}& i = j,\\
-1+\alpha\epsilon &\mathrm{if}& \{i,j\} = \{1,2\}, \\
-1 + \epsilon &\mathrm{if}& \{i,j\} \neq \{1,2\}.\end{array}\right.
\end{eqnarray}
Then,
\begin{eqnarray}\label{eq:det}
\hspace{.3in} \det W_n(\epsilon,\alpha) = (-1)^n \alpha\epsilon^{n-1}\left(2n-2 - (2n-4)\epsilon - (n-2-(n-3)\epsilon)\alpha\right).
\end{eqnarray}
In particular,
\begin{eqnarray}\label{eq:det2}
\hspace{.3in} \det W_n(\epsilon,1) = (-1)^n \epsilon^{n-1}\left(n - (n-1)\epsilon\right).
\end{eqnarray}
\end{lemma}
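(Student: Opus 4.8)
The plan is to compute $\det W_n(\epsilon,\alpha)$ directly by exploiting the rank-one-plus-diagonal structure of the matrix. First I would write $W_n(\epsilon,\alpha) = -\epsilon\, I + (-1+\epsilon)\, \mathbf{1}\mathbf{1}^T + \epsilon\,(\alpha-1)\,(e_1 e_2^T + e_2 e_1^T)$, where $\mathbf{1}$ is the all-ones vector and $e_1,e_2$ are standard basis vectors; here the diagonal is handled by noting $-1 = -\epsilon + (-1+\epsilon)$ so that the $(-1+\epsilon)\mathbf{1}\mathbf{1}^T$ term contributes $(-1+\epsilon)$ on the diagonal and the off-diagonal entries come out to $-1+\epsilon$ (or $-1+\alpha\epsilon$ on the $\{1,2\}$ block). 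Thus $W_n(\epsilon,\alpha)$ is a diagonal matrix $-\epsilon I$ perturbed by a matrix of rank at most $3$ (spanned by $\mathbf{1}$, $e_1$, $e_2$). Factoring out $-\epsilon$ and applying the matrix determinant lemma (Sylvester's determinant identity) to the low-rank update reduces the computation to the determinant of a small matrix whose size is the rank of the perturbation.

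The cleanest execution is probably to handle the case $\alpha = 1$ first, since then the perturbation is genuinely rank $1$: $W_n(\epsilon,1) = -\epsilon I + (-1+\epsilon)\mathbf{1}\mathbf{1}^T$, so $\det W_n(\epsilon,1) = (-\epsilon)^n \det\!\left(I + \tfrac{-1+\epsilon}{-\epsilon}\mathbf{1}\mathbf{1}^T\right) = (-\epsilon)^n\left(1 + \tfrac{-1+\epsilon}{-\epsilon}\cdot n\right) = (-1)^n\epsilon^{n-1}\big(\epsilon + n(1-\epsilon)\big)$; rearranging $\epsilon + n - n\epsilon = n - (n-1)\epsilon$ gives \eqref{eq:det2}. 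For general $\alpha$, after factoring out $-\epsilon$ I would apply $\det(I + UV^T) = \det(I_r + V^T U)$ with $U,V$ the $n\times 3$ matrices encoding the $\mathbf{1}, e_1, e_2$ directions, reducing to an explicit $3\times 3$ determinant in the variables $\epsilon,\alpha,n$, and then expand and simplify to match \eqref{eq:det}. As a sanity check, setting $\alpha=1$ in the resulting expression must recover \eqref{eq:det2}, and setting $\alpha = 0$ must give $0$ (since then rows $1$ and $2$ become equal when $n\geq 2$... one should double-check: at $\alpha=0$ the matrix $W_n(0\cdot\epsilon$-entry$)$ has the $\{1,2\}$ entry equal to $-1$ while all other off-diagonal entries are $-1+\epsilon$, so rows $1$ and $2$ differ only in columns $1,2$; nonetheless the stated formula vanishes at $\alpha=0$, consistent with the $\alpha$ prefactor).

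An alternative, more hands-on route is straightforward row/column reduction: subtract row $2$ from rows $3,\dots,n$ (they become nearly proportional), or subtract the last row from all others, to create many zero or repeated entries, then cofactor expand. This avoids citing the matrix determinant lemma but produces essentially the same small determinant. The main obstacle is purely bookkeeping: keeping track of the special role of the $\{1,2\}$ entry throughout the reduction, and carrying out the final polynomial simplification in $\epsilon,\alpha,n$ correctly so that it collapses to the compact form on the right-hand side of \eqref{eq:det}. There is no conceptual difficulty — the only risk is an arithmetic slip, which the two specializations $\alpha=1$ and the $(-1)^n\alpha\epsilon^{n-1}$ leading factor help guard against.
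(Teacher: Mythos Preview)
Your proposal is correct. The paper's own proof consists of the single sentence ``This is a straightforward determinant computation,'' with no method specified, so your low-rank decomposition and use of the matrix determinant lemma is a perfectly valid (and arguably the cleanest) way to carry it out. One small remark: your hesitation about the $\alpha=0$ sanity check is unnecessary --- at $\alpha=0$ rows $1$ and $2$ of $W_n(\epsilon,0)$ are genuinely identical (both equal to $(-1,-1,-1+\epsilon,\dots,-1+\epsilon)$), so the determinant vanishes, matching the $\alpha$ prefactor in \eqref{eq:det}.
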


\begin{proof}
This is a straightforward determinant computation.
\end{proof}

We now show that the matrix  with all entries $-1$ corresponds to a network on $n$ neurons 
that has the maximal number $2^n-n-1$ of flexible cliques, and is thus maximally flexible.

\begin{proposition}\label{prop:saturation}
 Let $(J,D)  \in \sN(n)$ be the network with the matrix $-D +J = -\one$, where $-\one$ is the $n \times n$ matrix having
 all entries $-1$.
 Then any subset $\sigma \subset \{1, \ldots, n\}$ with at least two
 neurons is a flexible clique. In particular,
 $\mathrm{flex}(J,D) = 2^n-n-1$.
\end{proposition}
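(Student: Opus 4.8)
The plan is to show directly that \emph{every} subset $\sigma$ with $|\sigma|\ge 2$ is a flexible clique. Since $-\one$ has all entries nonzero, its connectivity graph is $K_n$, so $X(G)$ is the full simplex on $\{1,\dots,n\}$ and there are exactly $2^n-n-1$ such subsets; as a $1\times 1$ matrix with negative diagonal is stable and never marginal, these are also the only subsets that could be flexible, so $\mathrm{flex}(J,D)=2^n-n-1$ will follow from the general bound $\mathrm{flex}\le 2^n-n-1$ once all of them are shown flexible. Throughout I would use Theorem~\ref{thm:stable-clique}: for a perturbation $A$, a set $\tau$ is a stable (resp.\ unstable) clique of the network with matrix $-\one+A$ exactly when $(-\one+A)_\tau$ is a stable (resp.\ unstable) matrix, and since the constraint graph is $K_n$ every perturbation is automatically consistent. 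So, fixing $\sigma$ with $k:=|\sigma|\ge 2$ and $\varepsilon>0$, I must produce perturbations $A_s,A_u$ with entries of modulus $\le\varepsilon$ making $\sigma$ maximally stable for $-\one+A_s$ and minimally unstable for $-\one+A_u$. For $A_s$ I would add a small $\epsilon>0$ to every off-diagonal entry indexed by a pair inside $\sigma$ and leave all other entries $0$; then $(-\one+A_s)_\sigma=W_k(\epsilon,1)$, which is stable for small $\epsilon$ by~\eqref{eq:det2} and Corollary~\ref{cor:signcondition}, so $\sigma$ is a stable clique, while for every $\tau\supsetneq\sigma$, choosing $v\in\sigma$ and $w\in\tau\setminus\sigma$, the $\{v,w\}$-submatrix of $(-\one+A_s)_\tau$ equals $\mtx{-1}{-1}{-1}{-1}$ (no perturbed entry touches $w$), which is marginally stable hence not stable, so by Corollary~\ref{cor:simplicialprop} the symmetric matrix $(-\one+A_s)_\tau$ is not stable either. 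Thus $\sigma$ is a maximally stable clique.

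The delicate step is $A_u$, and this is where I expect the main obstacle: the obvious idea of subtracting $\epsilon$ from all off-diagonal entries of $(-\one)_\sigma$ also makes every size-$\ge 2$ subset of $\sigma$ unstable, so $\sigma$ would fail to be \emph{minimally} unstable. The fix exploits~\eqref{eq:det}: for small $\epsilon>0$ the sign of $(-1)^m\det W_m(\epsilon,\alpha)$ agrees with that of $\alpha\bigl(2m-2-(m-2)\alpha\bigr)$, so $W_m(\epsilon,\alpha)$ passes the interlacing sign test of Corollary~\ref{cor:signcondition} iff $\alpha<\tfrac{2m-2}{m-2}$ (automatically for $m\le 2$), and this threshold \emph{strictly decreases} in $m$. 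Hence for $k\ge 3$ I would fix $\alpha$ in the nonempty interval between the consecutive thresholds $\tfrac{2k-2}{k-2}$ and $\tfrac{2k-4}{k-3}$ (any $\alpha>4$ if $k=3$), take $\epsilon$ small enough that $\alpha\epsilon\le\varepsilon$, and let $A_u$ add $\epsilon$ to every off-diagonal entry inside $\sigma$ except $\alpha\epsilon$ on one distinguished pair, with all other entries $0$. Then $(-\one+A_u)_\sigma=W_k(\epsilon,\alpha)$ is a nonsingular symmetric matrix with $(-1)^k\det W_k(\epsilon,\alpha)<0$, hence has a positive eigenvalue and is unstable; meanwhile every proper principal submatrix of $W_k(\epsilon,\alpha)$ is $(-1)$, or $W_m(\epsilon,1)$, or $W_m(\epsilon,\alpha)$ with $m<k$, all of which are stable for this choice of $\alpha$, so no proper subclique of $\sigma$ is unstable. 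The bulk of the work in this step is the (routine, via~\eqref{eq:det} and~\eqref{eq:det2}) verification of these sign conditions.

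Finally, when $k=2$ there are no proper subcliques of $\sigma$ of size $\ge 2$, so it suffices to take $A_u$ subtracting $\epsilon$ from the single off-diagonal pair inside $\sigma$, making $(-\one+A_u)_\sigma=\mtx{-1}{-1-\epsilon}{-1-\epsilon}{-1}$ unstable. In every case $\sigma$ is minimally unstable for $-\one+A_u$ and maximally stable for $-\one+A_s$, hence a flexible clique; ranging over all $\sigma$ with $|\sigma|\ge 2$ gives $\mathrm{flex}(J,D)\ge 2^n-n-1$, and together with the general upper bound this yields $\mathrm{flex}(J,D)=2^n-n-1$.
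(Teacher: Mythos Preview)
Your proof is correct and follows essentially the same strategy as the paper's: the $A_u$ construction via $W_k(\epsilon,\alpha)$ with $\alpha$ chosen between consecutive thresholds $\tfrac{2m-2}{m-2}$ is identical, and your $A_s$ differs only in that you leave the off-$\sigma$ entries unperturbed (yielding a marginally stable $2\times 2$ block that blocks stability of any $\tau\supsetneq\sigma$ via the contrapositive of Corollary~\ref{cor:simplicialprop}) rather than subtracting $\varepsilon$ there to make that block outright unstable as the paper does. Both variants work, and yours is marginally simpler.
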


\begin{proof}
 Let $\sigma$ be any subset with $|\sigma| = k\geq 2$ neurons.
To show that $\sigma$ is flexible,
 it suffices to show that there exists an $\e_0 >0$ so that for every
 $0<\e<\e_0$, there exist $\varepsilon$-perturbations $A_s$ and $A_u$
 of $(J,D)$ under which $\sigma$ becomes a maximally stable clique and a
 minimally unstable clique, respectively.  We show this via explicit
 construction of $A_s$ and $A_u$.

 Let $A_s$ be the symmetric matrix with $0$ entries on the diagonal,
 entries $(A_s)_{ij} = \varepsilon$ for distinct $i,j \in \sigma$, and
 $(A_s)_{ij} = -\varepsilon$ if either $i \notin \sigma$ or $j \notin
 \sigma$.  Clearly, $A_s$ is an $\varepsilon$-perturbation.  We need
 to show that $\sigma$ is a maximally stable clique for $(J +
 A_s,D)$; i.e., $\sigma$ is a stable
 clique of the perturbed network, and any clique $\tau$ which properly contains $\sigma$ is
 unstable. By Theorem~\ref{thm:stable-clique}, it is enough to show
 that the corresponding principal submatrices of $-\one +A_s$ are stable
 and unstable, respectively.

Recall~\eqref{eq:Wn}, and note that the principal submatrix $(-\one+A_s)_{\sigma} = W_k(\varepsilon,1)$, where
$k = |\sigma|$.  Using~\eqref{eq:det2}, we obtain
\begin{equation}\label{eq:s-sigma}
 \det (-\one +A_s)_{\sigma} = (-1)^k \e^{k-1}(k - (k-1) \e).
\end{equation}
Note that the same expression holds for any $\sigma' \subset \sigma$, with $k = |\sigma'|$.
By Corollary~\ref{cor:signcondition}, it follows that $(-\one+A_s)_{\sigma}$ is stable for all $0<\e\leq 1$.
To show that $\sigma$ is maximally stable, observe that any clique $\tau$ properly containing $\sigma$ 
must also contain an order $2$ clique whose corresponding principal submatrix is
$$\left(\begin{array}{cc} -1 & -1-\varepsilon \\ -1 - \varepsilon & -1 \end{array}\right),$$
which is unstable for all $\varepsilon>0$.
Since the matrix $(-\one + A_s)_\tau$ is symmetric, it follows from
Corollary~\ref{cor:simplicialprop} that $(-\one
+A_s)_{\tau}$, for any $\tau \supsetneq \sigma$,  is unstable for all $0< \e \leq 1$.

To generate a perturbation $A_u$ for which the clique $\sigma$ is minimally unstable,
we proceed as follows. Let $0<\e \leq 1$, and choose two
neurons $i_1, i_2 \in \sigma$ such that $i_1 = \min (\sigma)$ and $i_2 = \min(\sigma - \{i_1\})$.  
Let $A_u$ be the symmetric matrix with entries $(A_u)_{ij} = \varepsilon$
for distinct $i,j \in \sigma$ unless $\{i,j\}= \{i_1,i_2\}$.  We let
the entries $(A_u)_{i_1 i_2} = (A_u)_{i_2 i_1} = \alpha \e$, with
$\alpha$ to be determined later.  All other entries of $A_u$ are set to $0$.
To show that $\sigma$ is minimally unstable, we
need to choose $\alpha$ so that $(-\one + A_u)_{\sigma}$ is unstable while all its
proper principal submatrices are stable.  Since $-\one+A_u$ is symmetric, Corollary~\ref{cor:signcondition} tells us that this 
is accomplished if the determinant of $(-\one + A_u)_{\sigma}$ has the `wrong' sign $(-1)^{k+1}$, where $k = |\sigma|$, and
all $j \times j$ principal minors of  $(-\one + A_u)_{\sigma}$, with $j < k$, have the `right' sign $(-1)^j$.

Observing that $(-\one+A_u)_{\sigma} = W_k(\varepsilon,\alpha)$, we have from~\eqref{eq:det} that 
\begin{equation}\label{eq:sigma}
\det(-\one+A_u)_\sigma
= (-1)^k\alpha\varepsilon^{k-1}(2k-2-(2k-4)\varepsilon-(k-2-(k-3)\varepsilon)\alpha).
\end{equation}
There are two types of proper principal submatrices. 
The first are those that correspond to the cliques $\tau \subsetneq \sigma$ that
contain both $i_1$ and $i_2$, with $j = |\tau|$, and are equal to the matrices $W_j(\varepsilon,\alpha)$.  From~\eqref{eq:det}, these
have determinants
\begin{equation}\label{eq:tau}
\det(-\one +A_u)_{\tau}
= (-1)^j\alpha\varepsilon^{j-1}(2j-2-(2j-4)\varepsilon-(j-2-(j-3)\varepsilon)\alpha).
\end{equation}
The second type of principal submatrices correspond to cliques 
$\nu \subsetneq \sigma$ that do not contain both $i_1$ and $i_2$.  Letting $j = |\nu|$, these
are equal to the matrices $W_j(\varepsilon,1)$, and by~\eqref{eq:det2} have determinants
\begin{equation}\label{eq:nu}
\det(-\one +A_u)_{\nu}=
(-1)^{j}\varepsilon^{j-1}(j - (j-1) \varepsilon).
\end{equation}
Using Corollary~\ref{cor:signcondition}, we see that
% for our choice of $\e$ 
the cliques of type $\nu$ are all stable for $0<\varepsilon \leq 1$.
It remains to choose $\alpha$ so that~\eqref{eq:sigma} has sign $(-1)^{k+1}$ and~\eqref{eq:tau} has
sign $(-1)^j$ for all $j = 2,...,k-1$.

For $k>3$, we choose $\alpha$ so that 
$$
 \frac{2k-2 - (2k-4)\e}{k-2 - (k-3)\e} < \alpha < \min\left\{\frac{2j-2 - (2j-4)\e}{j-2 - (j-3)\e}\right\}_{j=2,...,k-1}.
$$ 
This is always possible, since for $0< \varepsilon \leq 1$ the
sequence on the right is decreasing; the minimum is attained for $j =
k-1$ and is greater than the term on the left, corresponding to $j =
k$.  Since for $k >3$ we also have $\alpha \e < 2\frac{k-2}{k-3}\e$ 
the matrix $A_u$ is a $4\e$-perturbation. When $k=3$, one
can choose $0<\e<\frac{1}{2}$ and $\alpha =4$, while in the case $k=2$
one needs simply to choose $\alpha <0$ so that $\sigma$ is a minimally
unstable clique.
\end{proof}

\noindent We now show that all the symmetric rank $1$ networks are maximally flexible:

\begin{proposition}\label{prop:sym-rank1}
Let $(J,D) \in \sN(n)$ be a
symmetric rank $1$ network.  Then $(J,D)$ is maximally flexible in
$\sN(n)$ and has flexibility $2^n-n-1$.
\end{proposition}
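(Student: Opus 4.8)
The plan is to reduce Proposition~\ref{prop:sym-rank1} to the case already handled in Proposition~\ref{prop:saturation}, namely the network with $-D+J = -\one$, by exhibiting an explicit change of variables that carries an arbitrary symmetric rank $1$ network with strictly negative diagonal to $-\one$ while preserving all the structure that flexibility depends on, i.e.\ the signs of the determinants of all principal submatrices. First I would use Lemma~\ref{lemma:signpattern} (or directly the rank $1$ hypothesis together with the negative diagonal) to write $-D+J = cuv^T$ for some scalar $c$ and vectors $u,v$; symmetry of $-D+J$ forces $v$ to be a scalar multiple of $u$, so after absorbing constants we may write $-D+J = -ww^T$ for a vector $w$ with all coordinates nonzero (nonzero because the diagonal entries $-w_i^2$ are strictly negative). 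Writing $\Lambda = \operatorname{diag}(w_1,\dots,w_n)$, we have $-D+J = -\Lambda\,\one\,\Lambda$ with $\one$ the all-ones matrix.

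Next I would observe that conjugation by the invertible diagonal matrix $\Lambda$ does not change eigenvalues, and in particular that for any $\sigma \subset \{1,\dots,n\}$ the principal submatrix $(-D+J)_\sigma = -\Lambda_\sigma (\one)_\sigma \Lambda_\sigma = \Lambda_\sigma \big((-\one)_\sigma\big)\Lambda_\sigma$ is similar to $(-\one)_\sigma$ (here $(-\one)_\sigma$ is just the $|\sigma|\times|\sigma|$ matrix of all $-1$'s). Hence $(-D+J)_\sigma$ is stable, marginally stable, or unstable exactly when $(-\one)_\sigma$ is, so by Theorem~\ref{thm:stable-clique} the network $(J,D)$ has exactly the same stable, marginal, and unstable sets as the network of Proposition~\ref{prop:saturation}. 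The only remaining point is that the notion of flexible clique involves $\varepsilon$-perturbations, not just the unperturbed network, so I must transport the perturbations as well: given the perturbations $A_s, A_u$ constructed in the proof of Proposition~\ref{prop:saturation} for the all-$-1$ matrix, I would set $\tilde A_s = \Lambda A_s \Lambda - \text{(correction on the diagonal)}$ — more carefully, I want perturbations $\tilde A_s,\tilde A_u$ of $-D+J$ such that $(-D+J+\tilde A_\bullet)_\tau$ is similar via $\Lambda_\tau$ to $(-\one + A_\bullet)_\tau$ for every relevant $\tau$ simultaneously, which is achieved by $\tilde A_\bullet = \Lambda A_\bullet \Lambda$ (this keeps the diagonal of $-D+J+\tilde A_\bullet$ equal to that of $-D+J$, since $A_\bullet$ has zero diagonal). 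Because $\Lambda$ is fixed, $\|\tilde A_\bullet\|_{\max} \le (\max_i w_i^2)\,\|A_\bullet\|_{\max}$, so an $\varepsilon$-perturbation of $-\one$ pulls back to an $O(\varepsilon)$-perturbation of $-D+J$; shrinking $\varepsilon$ accordingly shows that every $\sigma$ with $|\sigma|\ge 2$ is a flexible clique of $(J,D)$, giving $\mathrm{flex}(J,D) = 2^n-n-1$.

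The main obstacle — really the only subtlety — is making sure the similarity $X \mapsto \Lambda_\tau X \Lambda_\tau^{-1}$ is applied consistently: I need a single diagonal matrix $\Lambda$ whose restrictions $\Lambda_\tau$ simultaneously conjugate $(-\one+A_\bullet)_\tau$ to $(-D+J+\tilde A_\bullet)_\tau$ for all cliques $\tau$ at once, and I should double-check that "$A_\bullet$ symmetric with zero diagonal" implies "$\Lambda A_\bullet \Lambda$ symmetric with zero diagonal and consistent with $G = K_n$" so that $\tilde A_\bullet$ is a legitimate perturbation in $\sN(n)$. Once that bookkeeping is in place, stability/instability/maximality/minimality of all the relevant principal submatrices transfer verbatim from Proposition~\ref{prop:saturation}, and the conclusion follows. (An alternative, if one prefers to avoid transporting perturbations, is to redo the determinant computations of Proposition~\ref{prop:saturation} directly for $-\Lambda \one \Lambda$; but the determinant of a principal submatrix scales by $\prod_{i\in\tau} w_i^2 > 0$, so it has the same sign, and Corollary~\ref{cor:signcondition} applies unchanged — this is essentially the same argument phrased determinant-theoretically.)
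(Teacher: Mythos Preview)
Your overall strategy is exactly the paper's: write $-D+J=-ww^T=\Lambda(-\one)\Lambda$ with $\Lambda=\diag(w)$, and transport the explicit perturbations $A_s,A_u$ from Proposition~\ref{prop:saturation} via $\tilde A_\bullet=\Lambda A_\bullet\Lambda$. The paper makes the same move and draws the same conclusion.

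There is, however, one genuine slip in your reasoning that you should fix. You write that $(-D+J)_\sigma=\Lambda_\sigma\big((-\one)_\sigma\big)\Lambda_\sigma$ is \emph{similar} to $(-\one)_\sigma$ because ``conjugation by the invertible diagonal matrix $\Lambda$ does not change eigenvalues.'' But $\Lambda_\sigma M\Lambda_\sigma$ is a \emph{congruence}, not a conjugation (that would be $\Lambda_\sigma M\Lambda_\sigma^{-1}$), and congruence does not preserve eigenvalues in general. What congruence \emph{does} preserve, for symmetric $M$, is the inertia (Sylvester's law), which is enough to transfer stability, marginal stability, and instability. Equivalently---and this is precisely what the paper invokes---multiplying on both sides by the same diagonal matrix scales each principal minor by $\prod_{i\in\tau}w_i^{2}>0$, so all principal minors keep their sign and Corollary~\ref{cor:signcondition} applies. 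You already say exactly this in your parenthetical ``alternative''; that alternative is not optional but is in fact the correct justification, and it is the one the paper uses. Once you replace the similarity claim by the sign-of-principal-minors (or Sylvester) argument, your bookkeeping about $\tilde A_\bullet$ being symmetric, zero-diagonal, and an $O(\varepsilon)$-perturbation is fine, and the proof goes through identically to the paper's.
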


\begin{proof}
  Recall that by definition $(J,D) \in \sN(n)$ is a symmetric rank $1$
  network if the matrix $-D +J$ is a symmetric rank $1$ matrix. Since
  the matrix $-D + J$ has negative entries on the diagonal, there exists 
  a vector $x \in \IR^n$ so that $-D+J = -xx^T$.
  Let $\diag(x)$ be the $n\times n$ diagonal matrix associated to the
  vector $x\in \IR^n$. Then
  \begin{equation*}
    - D + J = - x x^T = \diag(x) \, (-\one) \diag(x),
  \end{equation*}
  where $-\one$ is the $n\times n$ matrix with all entries $-1$
  that we encountered in Proposition~\ref{prop:saturation}.
  
Since multiplication of a matrix on the left and right
  by the same diagonal matrix does not alter the sign of any principal minor, we have for any perturbation $A$
  $$\det(-xx^T+\diag(x)A \diag(x))_\sigma = \det(-\one+A)_\sigma,$$
  for any $\sigma \subset \{1,...,n\}$.
  Moreover, if $A$ is a symmetric perturbation, then so is $\diag(x)A \diag(x)$, and the
  stability of any principal submatrix of $-\one+A$ or $-xx^T+\diag(x)A \diag(x)$ is determined entirely by the signs of the principal minors (Corollary~\ref{cor:signcondition}). 
  We can thus obtain stable and unstable perturbations
  $\tilde{A}_s$ and $\tilde{A}_u$ of $-xx^T$
  for any subset $\sigma$ consisting of $|\sigma|\geq 2$
  neurons by modifying the perturbations $A_s$ and $A_u$ in
  Proposition~\ref{prop:saturation} accordingly: $\tilde{A}_s = \diag(x)
  A_s \diag(x)$ and $\tilde{A}_{u} = \diag(x) A_u \diag(x)$.
  We conclude that the network $(J,D)$ is maximally flexible in $\sN(n)$, with flexibility $2^n-n-1$.
\end{proof}

Before proving Theorem~\ref{thm:rank1}, which extends the above
results to $G$-constrained nonsymmetric networks in $\sN(G)$, we need to define the notion of `pruning' of a network.  We say that a graph $\tilde{G}$ is a {\em pruning} of the graph $G$ if the two graphs have the same vertices and the edges of $\tilde{G}$ form a subset of the edges of $G$. We say that a network $(\tilde{J},\tilde{D})_{\tilde{G}}$ is a {\em pruning} of $(J,D)_G$ if $\tilde{G}$ is a pruning of $G$, $\tilde{D} = D$, and $\tilde{J}_{ij} = J_{ij}$ for all edges $(ij) \in \tilde{G}$.  The following lemma shows that flexible cliques are `inherited' by pruning.

\begin{lemma}\label{lem:pruning} Let  $(\tilde{J},D)_{\tilde{G}}$ be a pruning of $(J,D)_G$.  Consider a clique $\sigma \in X(\tilde{G}) \subset X(G)$.  If $\sigma$ is a flexible clique of $(J,D)_G$, then $\sigma$ is also a flexible clique of $(\tilde{J},D)_{\tilde{G}}$.
\end{lemma}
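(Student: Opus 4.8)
The plan is to prove Lemma~\ref{lem:pruning} directly from the definition of flexible clique (Definition~\ref{def:flexible-clique}), by showing that any $\varepsilon$-perturbation witnessing flexibility of $\sigma$ in the larger network $(J,D)_G$ can be restricted to yield a witnessing perturbation in the pruned network $(\tilde J,D)_{\tilde G}$. The key point is that flexibility is tested only against cliques that contain or are contained in $\sigma$, and pruning can only \emph{remove} such cliques, never add them; hence the conditions ``maximally stable'' and ``minimally unstable'' become \emph{easier} to satisfy after pruning.

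First I would fix $\varepsilon>0$ and take $\varepsilon$-perturbations $A_s,A_u$ consistent with $G$ such that $\sigma$ is a maximally stable clique of $(J+A_s,D)_G$ and a minimally unstable clique of $(J+A_u,D)_G$. Define $\tilde A_s$ by zeroing out all entries of $A_s$ on pairs not in $\tilde G$ (and similarly $\tilde A_u$); these are $\varepsilon$-perturbations consistent with $\tilde G$. Since $\tilde J_{ij} = J_{ij}$ on edges of $\tilde G$ and $\tilde J$, $\tilde A_s$ both vanish off $\tilde G$, the principal submatrix $(-D+\tilde J+\tilde A_s)_\tau$ equals $(-D+J+A_s)_\tau$ for every $\tau\in X(\tilde G)$ — in particular for $\tau\subseteq\sigma$ and for $\tau\supseteq\sigma$ with $\tau\in X(\tilde G)$. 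By Theorem~\ref{thm:stable-clique}, stability/instability of a clique in a threshold-linear network is exactly stability/instability of the corresponding principal submatrix of $-D+J$, so the stability type of each such $\tau$ is unchanged by passing to the pruned network.

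Next I would check the two halves of flexibility. For $\sigma$ to be maximally stable in $(\tilde J,\tilde A_s)$: $\sigma\in X(\tilde G)$ by hypothesis, and its principal submatrix is stable (same as in the $G$-network), so $\sigma$ is a stable clique of $(\tilde J+\tilde A_s,D)_{\tilde G}$; and any $\tau\in X(\tilde G)$ with $\tau\supsetneq\sigma$ also lies in $X(G)$, so by maximal stability in the $G$-network its submatrix is unstable — hence $\sigma$ is maximal. The argument for minimal instability via $\tilde A_u$ is symmetric: $\sigma$'s submatrix is unstable, and every $\tau\in X(\tilde G)$ with $\tau\subsetneq\sigma$ lies in $X(G)$ and is therefore stable, so no proper subclique of $\sigma$ in $X(\tilde G)$ is unstable. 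Since this works for every $\varepsilon>0$, $\sigma$ is a flexible clique of $(\tilde J,D)_{\tilde G}$.

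I do not expect a serious obstacle here; the lemma is essentially bookkeeping. The one point requiring a little care is the containment $X(\tilde G)\subseteq X(G)$ (a clique of a subgraph is a clique of the graph) and the observation that ``maximally stable'' and ``minimally unstable'' are monotone under shrinking the ambient complex — removing cliques can only destroy witnesses to non-maximality or non-minimality, never create them. The other mild subtlety is ensuring that $\tilde A_s$ and $\tilde A_u$ remain genuine $\varepsilon$-perturbations consistent with $\tilde G$, which is immediate since zeroing entries cannot increase any $|A_{ij}|$ and kills exactly the entries constrained to vanish.
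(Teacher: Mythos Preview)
Your proposal is correct and follows exactly the same approach as the paper: zero out the entries of $A_s$ and $A_u$ on pruned edges, use that principal submatrices indexed by cliques of $\tilde G$ are unchanged, and invoke $X(\tilde G)\subset X(G)$ to transfer the maximal-stable and minimal-unstable conditions. Your write-up is in fact more detailed than the paper's (which is quite terse); the only minor imprecision is that maximal stability of $\sigma$ in the $G$-network guarantees larger cliques are \emph{not stable} (possibly marginal), not necessarily unstable---but this does not affect the argument.
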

\begin{proof}
%\carina{I think we need a short proof for this.}
  This follows from the Definition~\ref{def:flexible-clique} of flexible
  cliques. If $\sigma \in X(G)$ is a flexible clique, then there exist
  perturbations $A_s$ and $A_u$ consistent with $G$ so that $\sigma$
  is maximally stable for $(J+A_s,D)_G$ and minimally unstable for
  $(J+A_u, D)_G$.  If we also have $\sigma \in X(\tilde{G})$, define the perturbations $\tilde{A}_s$ and
  $\tilde{A}_u$, consistent with $\tilde G$, by setting all the entries
  in $A_s$ and $A_u$ corresponding to the pruned edges
  to $0$.  Since $X(\tilde{G}) \subset X(G)$, it follows that the perturbations
  $\tilde{A}_s$ and $\tilde{A}_u$ realize $\sigma$ as a flexible
  clique of $(\tilde{J},D)_{\tilde{G}}$.
\end{proof}

\noindent We now have all the ingredients necessary for proving Theorem~\ref{thm:rank1}. 

%\noindent We can now prove the Theorem~\ref{thm:rank1}. 

\begin{proof}[{\bf Proof of Theorem~\ref{thm:rank1}}] 
 We prove first that all rank $1$ networks $(J,D) \in \sN(n)$ are
 maximally flexible. Since the matrix $-D +J$ has rank $1$ and
 negative entries on the diagonal, there exists two vectors $x,y \in
 \IR^n$, with $x_i y_i>0$ for all $i = 1,...,n$, so that  $-D+J = -xy^T$.  
 Using these two vectors, we construct the diagonal matrix  $d =
 \diag\left(\sqrt{\dfrac{y_i}{x_i}}\right)$. Let $P$ be the matrix
 obtained from $-D +J$ by conjugation with the matrix $d$, i.e. $P = d
 (-D +J) d^{-1}$. It has entries
 $$P_{ij} = \sqrt{\dfrac{y_i}{x_i}} (-x_iy_j)\sqrt{\dfrac{x_j}{y_j}} = -\sqrt{x_iy_i}\sqrt{x_jy_j},$$
 and is therefore a rank $1$ symmetric matrix. By
 Proposition~\ref{prop:sym-rank1} it follows that the network $(d J
 d^{-1}, d D d^{-1})$ is a maximally flexible network in
 $\sN(n)$.
 Since $P$ and $-D +J$ are similar matrices, related via conjugation by a
 diagonal matrix, it follows that all corresponding principal
 submatrices $P_\sigma$ and $(-D+J)_\sigma$ are also similar. Hence, a perturbation $A$
 of the network $(J,D)$ has exactly the same stable and unstable
 cliques as a perturbation $d A d^{-1}$ of the network $(d J d^{-1},
 dD d^{-1})$. Since $(d J d^{-1},
 dD d^{-1})$ is maximally flexible, it follows that $(J,D)$ is also a maximally flexible network in $\sN(n)$.

Now let $(J,D)_G \in \sN(G)$ be a $G$-constrained network with a
  rank $1$ completion. We can think of the graph $G$ as a pruning of
  the complete graph $K_n$ on $n$ vertices. Let $(\bar{J},D) \in
  \sN(n)$ be a rank $1$ completion of the network $(J,D)_G$. By the previous arguments, 
  the network $(\bar{J},D)$ is maximally
  flexible in $\sN(n)$, and has flexibility $2^n-n-1$. In
  particular, any clique $\sigma \in X(G)$ with $|\sigma| \geq 2$ is a
  flexible clique of the network $(\bar{J},D)$.  By Lemma~\ref{lem:pruning}, $\sigma$ is also
  a flexible clique of $(J,D)_G$. Since $X(G)$ has $|X(G)|-n-1$ cliques
  with more than two neurons, it follows that the flexibility of
  $(J,D)$ is $|X(G)|-n-1$, which is maximal. 
\end{proof}

\subsection{Proof of Theorems~\ref{thm:main-result} and \ref{thm:unconstrained}}\label{sec:thm34}

First we prove our main result, Theorem~\ref{thm:main-result}.

\begin{proof}[{\bf Proof of Theorem~\ref{thm:main-result}}]
 Let $(J,D)_G$ be a maximally flexible threshold-linear
 network in $\sN(G)$.  This means that all the cliques $\sigma \in
 X(G)$ with at least two neurons are flexible.
% with $|\sigma|\geq 2$, are flexible.
%
 Since all flexible cliques are
 marginal cliques, Theorem~\ref{thm:stable-clique} gives that the
 corresponding principal submatrices of $(-D+J)_\sigma$ are all marginally stable.
 In particular, all $2\times 2$ and $3\times 3$ principal
 submatrices are marginally stable, and thus by Lemma~\ref{lem:23} it follows that 
 \begin{equation*}
  \det (-D+J)_{\sigma} =0, \quad \text{for all $\sigma \in X(G)$ with
    $|\sigma| = 2$ or $3$.}
 \end{equation*}
 Applying Lemma~\ref{lemma:zero-minors} to $-D+J$, it follows that
 for all $(ij) \in G$, the entry $J_{ij} \neq 0$. Thus the network
 $(J,D)_G$ has no silent connections.  By Lemma~\ref{lem:vanish-Hl1},
 the homology condition $H_1(X(G);\IZ) =0$ implies that
 $H^1(X(G);\IZ_2) = H^1(X(G);\IR) = 0$, and then by
 Proposition~\ref{prop:sparse-rank1} it follows that the matrix $-D+J$ has
 a rank $1$ completion. At the level of networks, this translates to
 $(J,D)_G$ having a rank $1$ completion.
\end{proof}

Theorem~\ref{thm:unconstrained} states that for the set $\sN(n)$ of unconstrained threshold-linear networks, the maximally
flexible networks are exactly the rank $1$ networks.  The proof is a direct application of Theorems~\ref{thm:main-result} and \ref{thm:rank1}.

\begin{proof}[{\bf Proof of Theorem~\ref{thm:unconstrained}}]
  ($\Rightarrow$) This direction is a direct
consequence of
  Theorem~\ref{thm:main-result}. 
Let $(J,D) \in \sN(n)$ be a maximally
  flexible network. Its graph is the complete graph $K_n$,
  and thus the corresponding clique complex $X(K_n)$ is contractible and
  satisfies $H_1(X(K_n),\ZZ) = 0$.  By Theorem~\ref{thm:main-result} it follows that $(J,D)$ is
  a rank $1$ network.
 ($\Leftarrow$) This follows from first part of Theorem~\ref{thm:rank1}.
\end{proof}

We also give a second proof of Theorem~\ref{thm:unconstrained}, without appealing to the
homological arguments used in the proof of Theorem~\ref{thm:main-result}.

\begin{proof}[{\bf Proof of Theorem~\ref{thm:unconstrained} without  homology/cohomology}]
 ($\Rightarrow$)
Suppose $(J,D)$ $ \in \sN(n)$ is a maximally flexible network.  This means $(J,D)$ must have flexibility
$2^n-n-1$. In particular, all $2\times 2$ and
$3\times 3$ principal submatrices of $-D+J$ must be marginally stable, and so by Lemma~\ref{lem:23} all
$2\times 2$ and $3\times 3$ principal minors must vanish. 
This, together with the fact that the diagonal
entries are strictly negative, implies that $-D+J$ satisfies the hypotheses of Lemma~\ref{lemma:rank1}, whose proof
does not rely on cohomology arguments, and is thus rank $1$.
($\Leftarrow$) This follows from first part of Theorem~\ref{thm:rank1}, which does not use homology or cohomology arguments.
\end{proof}

\begin{small}
\begin{section}{Acknowledgments}
\noindent C.C. was supported by NSF DMS-0920845 and a Courant Instructorship.  
A.D. was partially supported by NSF DMS-0505767.
V.I. was supported by NSF DMS-0967377 and the Swartz Foundation.
\end{section}

%\pagebreak
%*************************************************************
%    Bibliographies can be prepared with BibTeX using amsplain,
%    amsalpha, or (for "historical" overviews) natbib style.
\bibliographystyle{amsplain}

\bibliography{refs}
%    Insert the bibliography data here.
\end{small}

\end{document}